\begin{document}

\newcommand{\0}{\mathbf{0}}
\newcommand{\id}{\mbox{Id}}
\newcommand{\supp}{\mbox{supp}}
\newcommand{\I}{\mbox{int}}
\newcommand{\R}{\mathbf{R}}
\newcommand{\diag}{\mathbf{diag}}
\newcommand{\bR}{\partial\mathbf{R}}
\newcommand{\V}{\R^n}  
\newcommand{\C}{\mathbf{C}}  
\newcommand{\Cp}{\I\Sn }  
\newcommand{\Vi}{\R^{n_i}} 
\newcommand{\Ci}{\mathbf{C}_i} 
\newcommand{\Cip}{\I\Ci} 
\renewcommand{\H}{\mathbf{H}} 

\newtheorem{theorem}{Theorem}
\newtheorem{corollary}{Corollary}
\newtheorem{question}{Question}
\newtheorem{conjecture}{Conjecture}
\newtheorem{lemma}{Lemma}
\newtheorem{remark}{Remark}
\newtheorem{proposition}{Proposition}
\newcommand{\nz}{\hfill\break}
\newcommand{\lin}{\mbox{lin }}

\title{Robust permanence for interacting structured populations}
\author{Josef Hofbauer}
\author{Sebastian J. Schreiber}


\begin{abstract}
The dynamics of interacting structured populations can be  modeled by  $\frac{dx_i}{dt}= A_i (x)x_i$ where $x_i\in \R^{n_i}$, $x=(x_1,\dots,x_k)$,
and $A_i(x)$ are  matrices with non-negative off-diagonal entries. These models are permanent if there exists a positive global attractor and are robustly permanent if
 they  remain permanent following  perturbations of $A_i(x)$. Necessary and sufficient conditions for robust permanence are derived using dominant Lyapunov
 exponents $\lambda_i(\mu)$ of  the $A_i(x)$ with respect to
 invariant measures $\mu$.  The necessary
condition requires  $\max_i \lambda_i(\mu)>0$ for all
ergodic measures with support in the boundary of the non-negative cone. The
sufficient condition requires that the boundary  admits a Morse decomposition
such that  $\max_i \lambda_i(\mu)>0$
 for all invariant measures $\mu$ supported by a component of the Morse decomposition. When the Morse components are Axiom A, uniquely ergodic, or support all but one population,
 the necessary and sufficient conditions are equivalent.  Applications to spatial ecology, epidemiology, and  gene networks are given.
\end{abstract}

\maketitle

\centerline{\Large Appeared in \emph{Journal of Differential Equations}, 248, 1955-1971 (2010)}

\bibliographystyle{unsrtnat}

\section{Introduction}

A fundamental issue in population biology is what are the minimal conditions to ensure the long-term survivorship for  interacting populations whether they be viral particles, bio-chemicals, plants, or animals. When these conditions are met the interacting populations are said to persist or coexist. Since the pioneering work  of
Lotka and Volterra on competitive and predator--prey
interactions, Thompson,
Nicholson, and Bailey on host--parasite interactions, and
Kermack and McKendrick on disease outbreaks,
nonlinear difference and differential equations have been
used to understand conditions for population persistence~\cite{lotka-25,volterra-26,nicholson-bailey-35,thompson-24,kermack-mckendrick-27}. One particularly important form of persistence for deterministic models is \emph{permanence} or \emph{uniform persistence} which corresponds to the existence of a global attractor bounded away from extinction of one or more species. When such an attractor exists, interacting populations are able to recover from ``vigorous shake ups'' of the population state. Permanence has been characterized from a topological perspective~\cite{butler-waltman-86,hofbauer-so-89,garay-89} and with average Lyapunov functions~\cite{hofbauer-81,hutson-84,hutson-88,hofbauer-sigmund-98} for differential equation models of the form
\begin{equation}\label{eq:scalar}
\frac{dx_i}{dt}= x_i f_i(x) \qquad i=1,\dots,k
\end{equation}
where $x_i$ is the abundance of population $i$, $x=(x_1,\dots,x_k)$, and $f_i(x)$ is the per-capita growth of population $i$.

An important extension of the concept of permanence is \emph{robust permanence} \cite{hutson-schmitt-92,jde-00,hirsch-smith-zhao-01,garay-hofbauer-03} which requires that  \eqref{eq:scalar} remain permanent
after sufficiently small perturbations of the per-capita growth rates $f_i$. The importance of robust permanence stems from the fact that all models are approximations to reality. Consequently, if nearby
models (e.g. more realistic models) are not permanent despite the focal model being permanent, then one
can draw few (if any!) conclusions about the persistence of the
biological system being approximated by the model. One can view robust permanence as one crude form of structural stability for population models. For this perspective, it is not unexpected that there are permanent systems that cannot be approximated by robustly permanent systems \cite{nonlinearity-04}. In \cite{jde-00,garay-hofbauer-03}, criteria for robust permanence were developed with respect to the average per-capita growth rates $\int f_i(x) \,d\mu(x)$ with respect to invariant probability measures $\mu$ supported on the boundary of the positive orthant (i.e. where one or more populations are extinct). Roughly, these criteria for robust permanence require that the average per-capita growth rate (i.e. invasion rate) is positive for some missing species.

While \eqref{eq:scalar} can account for many types of population interactions, it assumes that all individuals within a population are exactly the same. However, theoretical biologists have long recognized that different individuals within a population may be in different states (e.g. different sizes or ages, living in different spatial locations) and these differences can have important consequences for population dynamics \cite{caswell-01,kon-etal-04,kon-05,kon-iwasa-07,salceanu-smith-09,deleenheer-etal-09a,deleenheer-etal-09b}. To account for how these differences influence persistence of interacting populations,  we develop criteria for robust permanence for the dynamics of $k$ interacting structured
populations. Roughly, these dynamics correspond to replacing $x_i$ in \eqref{eq:scalar} by vectors and $f_i(x)$ by matrices.   In section~\ref{sec:models}, we describe these models in greater detail and review some basic concepts from dynamical systems theory. In section~\ref{sect:invasion}, we introduce the structured analog of average per-capita growth rates and define the concept of robustly
unsaturated invariant sets. Necessary and sufficient
conditions for an invariant set to be robustly unsaturated are proven. For Axiom A or uniquely ergodic invariant sets, the necessary and sufficient criteria are shown to be equivalent. In
section~\ref{sect:permanence}, we use Morse decompositions and the
criteria for robustly unsaturated invariant sets to develop
necessary and sufficient conditions for robust permanence. In
sections~\ref{sect:patch} and~\ref{sect:applications}, we provide several applications of
our results to spatially structured ecological models, structured epidemiological models, and models of  gene networks.

\section{Models and assumptions\label{sec:models}}

 Let $x_i$ denote the state of the $i$-th population
that lies in the non-negative cone $\Ci$ of $\Vi$.  Define
$x=(x_1,\dots,x_k)\in \C$ to be the non-negative cone of $\V$ where
$n=\sum_{i=1}^k n_i$. Let $\Ci^+$  denote positive cone $\{x_i\in \Ci:
\prod_j x_i^j>0 \}$ for population $i$ and $\C^+=\prod_i \Ci^+$, the positive cone for the interacting populations. If $x\mapsto A_i(x)$ is a
map into $n_i\times n_i$ matrices that describes the growth of
population $i$, then the dynamics of the interacting populations are
given by
\begin{equation}\label{eq:main}
\frac{dx_i}{dt}=A_i(x)x_i \qquad i=1,\dots, k \end{equation}  Let
$x.t$ denote the solution of (\ref{eq:main}) with the initial
condition $x$, and more generally the semiflow generated by \eqref{eq:main}.
Before stating our assumption on (\ref{eq:main}),
recall a few definitions from dynamical systems. Given sets
$I\subseteq \R$ and $K\subseteq \C$, let $K.I=\{x.t:t\in I,x\in
K\}$. A set $K\subseteq \C$ is \emph{invariant} if $K.t=K$ for all
$t > 0$, and \emph{forward invariant} if $K.t \subset K$ for all $t > 0$.
The \emph{omega limit set} of a set $K\subseteq \C$ equals
$\omega(K)=\bigcap_{t\ge 0}\overline{K.[t,\infty)}$. The \emph{ alpha
limit set} of $K\subseteq \C$ is $\alpha(K)=\bigcap_{t\le
0}\overline{K.(-\infty,t]}$. Given a forward invariant set $K$, $B\subset
K$ is an \emph{attractor for the semiflow $x.t$ restricted to $K$} provided there
exists an open neighborhood $U\subseteq K$ of $B$ such that
$\omega(U)=B$.  The \emph{stable set of a compact invariant set $K$}
is defined by
   \[
    W^s(K)=\{x\in \C: \omega (x)\neq \emptyset\hbox{ and }
    \omega(x)\subseteq K\}.
    \]
The semiflow generated by \eqref{eq:main} is \emph{dissipative} if there exists an attractor $B$ with
$W^s(B)=\C$.

Throughout this paper, we make the following assumptions:
\begin{description}\item [A1]  $x\mapsto A_i(x)$ is continuous
\item [A2] $A_i(x)$ is irreducible and has non-negative off-diagonal entries.
\item [A3] $x.t$ is defined for all $t\ge 0$
\item [A4] $x.t$ is dissipative with global attractor $\Gamma(A)$.
\end{description}
\textbf{A1} is a basic regularity assumption. The non-negativity of
off-diagonal entries in \textbf{A2} implies that there are no negative
feedbacks between individuals of different states in population $i$.
This assumption is meet for many types of structured models as
discussed in sections~\ref{sect:patch} and \ref{sect:applications}.
 The irreducibility assumption of \textbf{A2} is
generically meet and implies that all individuals within a
population can pass through all states. In Remark~\ref{remark}, we discuss how this irreducibility assumption can be relaxed.
 \textbf{A3} ensures that the
population dynamics are defined for all future time. \textbf{A4} requires
that population densities/abundances eventually are uniformly
bounded, a condition that should be met for an biologically realistic model.

We say that the semiflow of \eqref{eq:main} is
\emph{permanent} if this semiflow is dissipative and there is a positive
attractor $B\subset \C^+$ such that $W^s(B) \subset \C^+$.
We will denote this positive attractor by $\Lambda(A)$.
Permanence  ensures that
populations can recover from rare large perturbations and allows for
a diversity of dynamical behaviors.

For $K\subset \C$ and $\delta> 0$, define the
\emph{$\delta$-neighborhood of $K$} as
\[
N_\delta(K)=\{x\in \C: |x-y|<\delta \mbox{ for some }y\in K\}.
\]
We define a \emph{ $\delta$-perturbation of (\ref{eq:main})} to be a
system of the form
\[
\frac{dx}{dt}=\tilde A(x)x
\]
that satisfies assumptions \textbf{A1}--\textbf{A4},
$\Gamma(\tilde A)\subset N_\delta(\Gamma(A))$, and  $\|A(x)-\tilde
A(x)\|\le \delta $ for all $x\in N_\delta(\Gamma(A))$.
(\ref{eq:main}) is \emph{robustly permanent} if there exist
$\delta>0$ and $\epsilon>0$ such that all $\delta$-perturbations of (\ref{eq:main})
are permanent, and $d(\Lambda(\tilde A), \C \setminus \C^+) > \epsilon$ for all $\delta$-perturbations i.e., there is a common/uniform region of repulsion around the boundary.

\section{Invasion of compact sets\label{sect:invasion}}
We begin by studying the linear skew product flows on
$\Gamma(A)\times \Vi$ defined by $(x,y).t=(x.t,B_i(t,x)y)$ where
$Y(t)=B_i(t,x)$ is the solution to $\frac{dY}{dt} (t)= A_i(x.t) Y(t)$ with
$Y(0)$ equal to the identity matrix. Our assumption that $A_i$ is
irreducible with non-negative off diagonal entries implies that
$B_i(t,x)\Ci\subset \Ci^+$ for all $x$ and $t>0$ (see, e.g., \cite{smith-95}). A result
of Ruelle~\cite[Prop.3.2]{ruelle-79} implies that there
exist continuous maps $u_i,v_i:\Gamma(A)\to \Ci^+$ with
$|u_i(x)|=|v_i(x)|=1$ such that
\begin{itemize}
\item the line bundle $E_i(x)$ spanned by $u_i(x)$ is invariant i.e.
$E_i(x.t)=B_i(x,t)E_i(x)$ for all $t\ge 0$.
\item the vector bundle $F_i(x)$ perpendicular to $v_i(x)$ is invariant
i.e. $F_i(x.t)=B_i(x,t)F_i(x)$ for all $t\ge 0$.
\item there exist constants $\alpha>0$ and $\beta>0$ such that
\begin{equation}\label{perron}
\|B_i(t,x)|F_i(x)\| \le \alpha \exp(-\beta t) \|B_i(t,x)|E_i(x)\|
\end{equation}
for all $x\in K$ and $t\ge 0$.
\end{itemize}

\subsection{Invasion rates}
Given any $x\in \Gamma(A)$, define \emph{the invasion rate of species $i$
at population state $x$} as
\[
\lambda_i(x)=\limsup_{t\to\infty} \frac{1}{t} \ln \|B_i(t,x)\|
\]
When $x_i=0$, $\lambda_i(x)$ provides an upper bound to the rate of
growth of population $i$ when introduced at infinitesimally small
densities. Important properties of this invasion rate are summarized
in the following two propositions. For instance, the second proposition implies
if $(x.t)_i$ stays bounded away from zero (i.e. population $i$
persists), then $\lambda_i(x)=0$. Alternatively, if $\lambda_i(x)<0$, then
population $i$ is doomed to extinction.


\begin{proposition}\label{prop:invasion rate}
$\lambda_i(x)$ satisfies the following properties:
\begin{itemize}
\item $\lambda_i(x)=\limsup_{t\to\infty} \frac{1}{t} \ln |B_i(t,x)v|$ for any
$v>0$ in $\Vi$, and
\item $\lambda_i(x)=\limsup_{t\to\infty} \frac{1}{t}\int_0^t
\langle A_i(x.s)u_i(x.s),u_i(x.s) \rangle \,ds$. \end{itemize}
\end{proposition}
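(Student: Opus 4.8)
The plan is to prove the two characterizations of $\lambda_i(x)$ by exploiting the Perron-type splitting $\Vi = E_i(x)\oplus F_i(x)$ together with the exponential separation estimate~\eqref{perron}. The central observation is that the dominant growth rate of the cocycle $B_i(t,x)$ is governed entirely by its action on the one-dimensional bundle $E_i(x)$, since the complementary directions $F_i(x)$ are exponentially subdominant.

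First I would establish the first bullet, namely that $\lambda_i(x)=\limsup_{t\to\infty}\frac{1}{t}\ln|B_i(t,x)v|$ for any $v>0$ in $\Vi$. The upper bound $|B_i(t,x)v|\le \|B_i(t,x)\|\,|v|$ is immediate and gives $\limsup_{t\to\infty}\frac1t\ln|B_i(t,x)v|\le\lambda_i(x)$. For the reverse inequality, I would decompose $v=a\,u_i(x)+w$ with $w\in F_i(x)$. Since $v>0$ lies in the interior of the cone and $u_i(x)\in\Ci^+$, the coefficient $a$ is strictly positive; then $B_i(t,x)v=a\,B_i(t,x)u_i(x)+B_i(t,x)w$, and by~\eqref{perron} the $F_i$-component is negligible relative to the $E_i$-component. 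Hence $|B_i(t,x)v|$ grows at the same exponential rate as $\|B_i(t,x)|E_i(x)\|$. The remaining point is that $\|B_i(t,x)\|$ and $\|B_i(t,x)|E_i(x)\|$ have the same exponential growth rate: again~\eqref{perron} shows the operator norm is achieved (up to a bounded factor) along $E_i$, so $\frac1t\ln\|B_i(t,x)\|$ and $\frac1t\ln\|B_i(t,x)|E_i(x)\|$ share the same $\limsup$. Combining these gives the first identity.

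For the second bullet I would compute the growth rate along the dominant direction directly. Writing $y(t)=B_i(t,x)u_i(x)$, invariance of $E_i$ means $y(t)=c(t)\,u_i(x.t)$ for a scalar $c(t)>0$ with $|y(t)|=c(t)$ since $|u_i|=1$. Differentiating $\frac{d}{dt}y(t)=A_i(x.t)y(t)$ and pairing with $u_i(x.t)$, the logarithmic derivative of $|y(t)|$ works out to $\frac{d}{dt}\ln|y(t)|=\langle A_i(x.t)u_i(x.t),u_i(x.t)\rangle$, using that $u_i(x.t)$ is a unit vector so $\langle\dot u_i,u_i\rangle=0$. Integrating from $0$ to $t$ and dividing by $t$ then gives $\frac1t\ln|B_i(t,x)u_i(x)|=\frac1t\int_0^t\langle A_i(x.s)u_i(x.s),u_i(x.s)\rangle\,ds$, and taking $\limsup$ together with the first bullet (applied to $v=u_i(x)>0$) yields the claim.

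The main obstacle I anticipate is handling the first bullet cleanly when the coefficient of $u_i(x)$ could in principle vanish or when $v$ sits on the boundary of the cone; insisting on $v>0$ (strictly positive) is exactly what rules this out, since $u_i(x)\in\Ci^+$ forces the $E_i$-component of any interior vector to be strictly positive, guaranteeing the dominant term survives. A secondary technical care point is justifying the logarithmic-derivative computation in the second bullet: one must confirm that $t\mapsto u_i(x.t)$ is differentiable along orbits (which follows from the smoothness of the Ruelle splitting in the flow direction) so that the identity $\langle\dot u_i,u_i\rangle=0$ from $|u_i|\equiv1$ may be invoked. Both of these are routine once the splitting and~\eqref{perron} are in hand.
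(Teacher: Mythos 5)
Your proposal is correct and follows essentially the same route as the paper: both bullets rest on the splitting $\Vi=E_i(x)\oplus F_i(x)$ and the domination estimate \eqref{perron} to reduce everything to the growth of $|B_i(t,x)u_i(x)|$, whose logarithmic derivative is then identified with $\langle A_i(x.t)u_i(x.t),u_i(x.t)\rangle$. The only cosmetic differences are that the paper obtains $\lambda_i(x)=\limsup_{t\to\infty}\frac{1}{t}\ln|B_i(t,x)u_i(x)|$ by bounding $|B_i(t,x)v|$ for arbitrary nonzero $v$ and then taking the supremum defining the operator norm, and computes the logarithmic derivative via the additivity (cocycle property) of $b_i(t,x)=\ln|B_i(t,x)u_i(x)|$ rather than by differentiating $|y(t)|$ directly as you do.
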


\begin{proof}
To prove the first property, we first show that
\[\lambda_i(x)=\limsup_{t\to\infty}\frac{1}{t}\ln
|B_i(t,x)u_i(x)|.\] To this end, let $v\in \Vi$ be any non-zero vector. Since
$\Vi=E_i(x)\oplus F_i(x)$, we can write $v=a u_i(x)+w$ with $a\in
\R$ and $w\in F_i(x)$. Equation (\ref{perron}) implies
\begin{eqnarray*}
|B_i(t,x)v| &\le& a|B_i(t,x)u_i(x)|+|B_i(t,x)w|\\
& \le&  |B_i(t,x)u_i(x)|\left(a+\alpha\exp(-\beta t)|w|\right).
\end{eqnarray*}
Hence,
\[
\limsup_{t\to\infty}\frac{1}{t}\ln |B_i(t,x)v| \le
\limsup_{t\to\infty}\frac{1}{t}\ln |B_i(t,x)u_i(x)|\le
\lambda_i(x)
\]
for all non-zero vectors $v\in \Vi$. Since $\|B_i(t,x)\|=\sup_{|v|=1}
|B_i(t,x)v|$, this inequality implies that
$\lambda_i(x)=\limsup_{t\to\infty}\frac{1}{t} \ln
|B_i(t,x)u_i(x)|$. Now let $v\in \Ci^+$. Then, we can
write $v=au_i(x)+w$ with $a>0$ and $w\in F_i$. Equation
(\ref{perron}) implies
\begin{eqnarray*}
|B_i(t,x)v|&\ge & a|B_i(t,x)u_i(x)|-|B_i(t,x)w|\\
&\ge & |B_i(t,x)u_i(x)|(a-\alpha\exp(-\beta t)|w|)
\end{eqnarray*}
Since $a>0$, this inequality implies that
\[
\lambda_i(x)\ge \limsup_{t\to\infty}\frac{1}{t}\ln |B_i(t,x)v| \ge
\limsup_{t\to\infty}\frac{1}{t}\ln |B_i(t,x)u_i(x)| =\lambda_i(x)
\]

To prove the second assertion, let $b_i(t,x)=\ln |B_i(t,x)u_i(x)|$.
Invariance of $u_i(x)$ implies that $b_i(t,x)$ is additive:
\begin{eqnarray*}
b_i(t+s,x)&=&\ln |B_i(t+s,x)u_i(x)| =\ln
|B_i(s,x.t)B_i(t,x)u_i(x)|\\
&=& \ln |B_i(s,x.t) u_i(x.t)||B_i(t,x)u_i(x)|\\
&=&\ln
|B_i(s,x.t)u_i(x.t)|+\ln |B_i(t,x)u_i(x)|\\
&=& b_i(s,x.t)+b_i(t,x) \end{eqnarray*}
Additivity of $b_i(t,x)$ and
the fact that $b_i(0,x)=0$ implies
\begin{eqnarray*}
\frac{d}{dt} b_i(t,x)&=& \lim_{s\to 0} \frac{b_i(t+s,x)-b_i(t,x)}{s}\\
&=& \lim_{s\to 0} \frac{b_i(s,x.t)}{s}= \frac{d}{ds}\Big|_{s=0} b_i(s,x.t)\\
&=& \frac{\langle\frac{d}{ds} B_i(s, x.t)u_i(x.t),B_i(s,x.t)u_i(x.t)\rangle}{
|B_i(s,x.t)u_i(x.t)|^2}\Big|_{s=0}\\
&=& \langle A_i(x.t)u_i(x.t),u_i(x.t)\rangle
\end{eqnarray*}
The Fundamental Theorem of Calculus implies
\[
b_i(t,x)=\int_0^t \langle A_i(x.s)u_i(x.s),u_i(x.s)\rangle \,ds
\]
and the second assertion follows. \end{proof}

\begin{proposition}\label{prop:invasion rate2}
For the solutions of  \eqref{eq:main} we have:
\begin{itemize}
\item if $x_i>0$, then $\lambda_i(x)\le 0$.
\item if $\lambda_i(x)<0$, then $\lim_{t\to\infty} (x.t)_i=0$
\item if $\limsup_{t\to\infty} |(x.t)_i|>0$, then
$\lambda_i(x)=0$.
\end{itemize}
\end{proposition}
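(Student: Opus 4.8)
The plan is to read everything through the identity $(x.t)_i=B_i(t,x)x_i$, which holds because $t\mapsto B_i(t,x)x_i$ solves the $i$-th block of \eqref{eq:main} along the orbit of $x$ with initial value $x_i$; thus the growth of population $i$ along the orbit is exactly the growth of the cocycle $B_i(t,x)$ applied to the fixed vector $x_i$. The other standing input is dissipativity: since $x\in\Gamma(A)$ and $\Gamma(A)$ is compact and invariant, the orbit $\{x.t\}$ stays in $\Gamma(A)$, so there is a constant $M$ with $|(x.t)_i|\le M$ for all $t\ge0$. With these two facts the first two assertions are short, and essentially all the strength of Proposition~\ref{prop:invasion rate} is spent on the third.

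For the first assertion I would take $x_i>0$ and apply the first bullet of Proposition~\ref{prop:invasion rate} with $v=x_i$, giving $\lambda_i(x)=\limsup_{t\to\infty}\frac1t\ln|B_i(t,x)x_i|=\limsup_{t\to\infty}\frac1t\ln|(x.t)_i|$; since $|(x.t)_i|\le M$ the right-hand side is $\le0$. For the second assertion I would dispose of $x_i=0$ trivially (then $(x.t)_i\equiv0$), and for $x_i\neq0$ use only the definition $\lambda_i(x)=\limsup_{t\to\infty}\frac1t\ln\|B_i(t,x)\|$: choosing $\gamma$ with $\lambda_i(x)<\gamma<0$, there is $T$ with $\|B_i(t,x)\|\le e^{\gamma t}$ for $t\ge T$, whence $|(x.t)_i|\le|x_i|e^{\gamma t}\to0$. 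This direction uses only the submultiplicative upper bound, so it needs no positivity of $x_i$.

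The third assertion I would split into two inequalities. The bound $\lambda_i(x)\ge0$ is the contrapositive of the second assertion: $\limsup_{t}|(x.t)_i|>0$ forces $x_i\neq0$, and were $\lambda_i(x)<0$ the component would decay to $0$. The reverse bound $\lambda_i(x)\le0$ is the delicate point, and I expect it to be the main obstacle. One would like to repeat the first-assertion argument, but $x_i$ may lie on the boundary of $\Ci$ (nonzero yet not strictly positive), so Proposition~\ref{prop:invasion rate} does not apply verbatim with $v=x_i$. I would enter the interior by flowing forward: by irreducibility and non-negativity of the off-diagonal entries, $B_i(\tau,x)\Ci\subset\Ci^+$ for every $\tau>0$, so $y:=x.\tau$ has $y_i=(x.\tau)_i>0$, and the first assertion applied at $y\in\Gamma(A)$ gives $\lambda_i(y)\le0$.

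It then remains to transport this back, i.e. to show that $\lambda_i$ is constant along orbits, $\lambda_i(x.\tau)=\lambda_i(x)$; this is the one genuinely non-immediate step. I would derive it from the cocycle identity $B_i(t+\tau,x)=B_i(t,x.\tau)B_i(\tau,x)$ together with invertibility of $B_i(\tau,x)$: sandwiching $\|B_i(t,x.\tau)\|$ between $\|B_i(t+\tau,x)\|/\|B_i(\tau,x)\|$ and $\|B_i(t+\tau,x)\|\,\|B_i(\tau,x)^{-1}\|$ and taking $\frac1t\ln(\cdot)$ as $t\to\infty$, the bounded correction terms $\frac1t\ln\|B_i(\tau,x)^{\pm1}\|$ vanish and the reparametrization $s=t+\tau$ identifies the two $\limsup$'s. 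Combining $\lambda_i(x)=\lambda_i(x.\tau)=\lambda_i(y)\le0$ with the already-established $\lambda_i(x)\ge0$ yields $\lambda_i(x)=0$, completing the proof.
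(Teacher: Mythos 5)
Your proof is correct and follows the paper's route for the first two assertions: the paper likewise deduces the first item from the identity $\lambda_i(x)=\limsup_t\frac1t\ln|B_i(t,x)x_i|=\limsup_t\frac1t\ln|(x.t)_i|$ plus dissipativity, and splits the second into $x_i=0$ (invariance of the face) and $x_i\neq 0$; your use of the operator-norm bound $|(x.t)_i|\le\|B_i(t,x)\|\,|x_i|$ there is a harmless simplification. The only real divergence is the third assertion, which the paper dismisses with ``follows from the first two assertions,'' whereas you correctly notice that applying the first assertion requires $x_i$ strictly positive and patch the case $x_i\in\partial\Ci\setminus\{0\}$ by flowing into $\Ci^+$ and proving orbit-invariance of $\lambda_i$ from the cocycle identity and invertibility of $B_i(\tau,x)$. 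That argument is sound (the bounded correction terms do vanish after dividing by $t$), but there is a shorter repair consistent with the paper's intent: in the proof of Proposition~\ref{prop:invasion rate}, the coefficient $a$ in the decomposition $v=au_i(x)+w$ satisfies $a=\langle v,v_i(x)\rangle/\langle u_i(x),v_i(x)\rangle$, which is strictly positive for \emph{every} nonzero $v\in\Ci$ because $v_i(x)$ and $u_i(x)$ lie in $\Cip[i]$-- more precisely in $\Ci^+$; hence the identity $\lambda_i(x)=\limsup_t\frac1t\ln|B_i(t,x)v|$ already holds for all nonzero $v\in\Ci$, and the first assertion applies verbatim with $v=x_i\neq 0$. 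Either way the statement is fully proved; your version just makes explicit a step the paper leaves implicit.
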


\begin{proof}
First, assume that $x$ satisfies $x_i>0$.
Since the semiflow is dissipative, there exists $\gamma>0$ such that
$|x.t| \le \gamma$ for all $t\ge 0$. Proposition~\ref{prop:invasion rate} and the
definition of the skew product flow imply
\begin{eqnarray*}
\lambda_i(x)&=& \limsup_{t\to\infty} \frac{1}{t} \ln |B_i(t,x)x_i|=\limsup_{t\to\infty} \frac{1}{t} \ln |(x.t)_i|\\
&\le&\limsup_{t\to\infty} \frac{\ln \gamma}{t} =0
\end{eqnarray*}

To prove the second  assertion, assume that $\lambda_i(x)<0$. If
$x_i=0$, then the invariance of the faces of $\C$ imply that
$(x.t)_i=0$ for all $t\ge 0$. Alternatively, if $x_i>0$, then Proposition~\ref{prop:invasion rate} and the definition of the skew product flow imply
\begin{eqnarray*}
\limsup_{t\to\infty} \frac{1}{t} \ln |(x.t)_i| &=& \lambda_i(x)<0.
\end{eqnarray*}

The final assertion follows from the first two assertions.
\end{proof}

\subsection{Invariant measures}
We review some definitions from ergodic theory.  Given a Borel
probability measure $\mu$ on $\C$, the {\em support of $\mu$},
denoted $\supp(\mu)$, is the smallest closed set whose complement
has measure zero. A Borel probability measure $\mu$ is called \emph{
invariant} for \eqref{eq:main} provided that $\int h(x)\,d\mu(x)= \int h(x.t)\,d\mu(x)$ for all
$t\ge 0$ and for all bounded continuous functions $h:\C\to\R$. An invariant
measure $\mu$ is called \emph{ergodic} provided that $\mu(B)=0$ or
$1$ for any invariant Borel set $B$.

For an invariant measure $\mu$ define the \emph{invasion rate of
species $i$ with respect to $\mu$} as
\[
\lambda_i(\mu)=\int_\C \langle A_i(x)u_i(x),u_i(x) \rangle \, d\mu(x)
\]

\begin{proposition}\label{prop:invasion2}
Let $\mu$ be an invariant measure. $\lambda_i(\mu)$ satisfies the
following properties:
\begin{itemize}
\item $\lambda_i(x)=\lim_{t\to\infty} \frac{1}{t}\int_0^t
\langle A_i(x.s)u_i(x.s),u_i(x.s) \rangle ds$ exists  for $\mu$-almost every $x$.
Moreover, if $\mu$ is ergodic, $\lambda_i(x)=\lambda_i(\mu)$ for
$\mu$-almost every $x$.
\item if $\mu$ is ergodic, then there exists $I\subset \{1,\dots, k\}$ such that
$\mu(\prod_{i\in I} \Ci^+)=1$ and $\lambda_i(\mu)=0$ for all $i\in
I$.
\end{itemize}
\end{proposition}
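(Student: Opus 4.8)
The plan is to recognize the first assertion as a direct application of Birkhoff's ergodic theorem and to obtain the second from the invariance of the extinction faces together with Poincar\'e recurrence.

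For the first bullet, set $g_i(x)=\langle A_i(x)u_i(x),u_i(x)\rangle$. By \textbf{A1} and the continuity of $u_i$ furnished by Ruelle's theorem, $g_i$ is continuous; since any invariant probability measure $\mu$ is carried by the compact global attractor $\Gamma(A)$, $g_i$ is bounded on $\supp(\mu)$ and hence $\mu$-integrable. The continuous-time Birkhoff ergodic theorem then yields a function $\bar g_i$ with $\bar g_i(x)=\lim_{t\to\infty}\frac1t\int_0^t g_i(x.s)\,ds$ existing for $\mu$-a.e.\ $x$, satisfying $\int \bar g_i\,d\mu=\int g_i\,d\mu=\lambda_i(\mu)$, and with $\bar g_i(x)=\lambda_i(\mu)$ for $\mu$-a.e.\ $x$ when $\mu$ is ergodic. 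Proposition~\ref{prop:invasion rate} identifies $\lambda_i(x)$ with the $\limsup$ of these averages; since the genuine limit exists a.e., the $\limsup$ equals it, which gives the claim in both the general and the ergodic case.

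For the second bullet, I would first note that each extinction face $\{x_i=0\}$ is invariant (as $x_i=0$ forces $\dot x_i=A_i(x)x_i=0$), so ergodicity gives $\mu(\{x_i=0\})\in\{0,1\}$; set $I=\{i:\mu(\{x_i=0\})=0\}$, so that $\mu(\{x_i\ne 0\})=1$ for $i\in I$ and $\mu(\{x_i=0\})=1$ for $i\notin I$. To upgrade ``$x_i\ne0$'' to ``$x_i\in\Ci^+$'' I would use the smoothing property $B_i(t,x)\Ci\subset\Ci^+$ for $t>0$: for any fixed $t>0$ the preimage of $\{x_i\in\Ci^+\}$ under the time-$t$ map contains $\{x_i\ne0\}$, so invariance of $\mu$ forces $\mu(\{x_i\in\Ci^+\})=1$ for $i\in I$. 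Intersecting these finitely many full-measure sets together with $\{x_i=0\}$ for $i\notin I$ yields $\mu(\prod_{i\in I}\Ci^+)=1$.

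Finally, to show $\lambda_i(\mu)=0$ for $i\in I$, the bound $\lambda_i(x)\le 0$ whenever $x_i>0$ (Proposition~\ref{prop:invasion rate2}) combined with the ergodic identity of the first bullet gives $\lambda_i(\mu)\le 0$. For the reverse inequality I would invoke Poincar\'e recurrence for the measure-preserving time-one map: $\mu$-a.e.\ $x$ admits $t_n\to\infty$ with $x.t_n\to x$, so for $i\in I$, where $|x_i|>0$ a.e., continuity gives $\limsup_{t\to\infty}|(x.t)_i|\ge |x_i|>0$, whence $\lambda_i(x)=0$ by the third bullet of Proposition~\ref{prop:invasion rate2}; the ergodic identity then forces $\lambda_i(\mu)=0$. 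I expect the main subtlety to lie precisely here, in securing the reverse inequality $\lambda_i(\mu)\ge 0$: the one-sided bound is immediate, but the equality genuinely requires the recurrence of $\mu$-typical points together with the continuous dependence that converts recurrence of $x$ into nonvanishing of $\limsup_{t\to\infty}|(x.t)_i|$.
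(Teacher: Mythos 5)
Your proof is correct and follows essentially the same route as the paper's: the Birkhoff ergodic theorem combined with Proposition~\ref{prop:invasion rate} for the first assertion, and invariance of the extinction faces plus Poincar\'e recurrence feeding into the third bullet of Proposition~\ref{prop:invasion rate2} for the second. The only differences are cosmetic --- you spell out the upgrade from $x_i\neq 0$ to $x_i\in\Ci^+$ via the smoothing property of $B_i(t,x)$, and you apply recurrence pointwise ($x.t_n\to x$) rather than to a compact subset of $\prod_{i\in I}\Ci^+$ of positive $\mu$-measure as the paper does.
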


\begin{proof}
Let $\mu$ be an invariant measure. Define
$h_i(x)=\langle A_i(x)u_i(x),u_i(x)\rangle$. The Birkhoff ergodic theorem and Proposition~\ref{prop:invasion rate} imply
\[
\lambda_i(x)=\lim_{t\to\infty} \frac{1}{t} \int_0^t h_i(x.s)\,ds
\]
exists for $\mu$-almost every $x$. Moreover, $\lambda_i(x)=\lambda_i(\mu)$ $\mu$-almost surely if $\mu$
is ergodic.

Assume $\mu$ is ergodic. By ergodicity and invariance of the faces
of $\C$, there exists $I\subset \{1,\dots, k\}$ such that
$\mu(\prod_{i\in I}\Ci^+)=1$. Let $K\subset \prod_{i\in I}\Ci^+$ be
a compact set such that $\mu(K)>0$. The Poincar\'{e} recurrence
theorem and the Birkhoff ergodic theorem imply that there is $x\in \prod_{i\in I} \Ci^+$ such that
$x.t_n\in K$ for some $t_n\uparrow \infty$ and
$\lambda_i(x)=\lambda_i(\mu)$ for all $i$. The third assertion of Proposition~\ref{prop:invasion rate2} implies
$\lambda_i(\mu)=0$ for all $i\in I$.
\end{proof}

\subsection{Robustly unsaturated sets}

Let $K\subset \partial \C$ be a compact isolated invariant set for
the flow $x.t$ restricted to $\partial \C$. $K$ is
\emph{unsaturated} if  $W^s(K)\subset \partial \C$ and $K$ is isolated for
$x.t$. If $K$ is not unsaturated, then $K$ is \emph{saturated}. $K$
is \emph{robustly unsaturated for (\ref{eq:main})} if there exists
$\delta>0$ such that the continuation of $K$ for any
$\delta$-perturbation of (\ref{eq:main}) is unsaturated.

\begin{theorem}\label{thm:unsaturated}
Let $K$ be a compact isolated invariant set for $x.t$ restricted to
$\partial \C$. If one of the the following equivalent conditions
hold
\begin{itemize}
\item for all invariant measures $\mu$ supported by $K$
\begin{equation}\label{eq:condition}
\max_{1\le i\le k} \lambda_i(\mu)>0
\end{equation}
\item there exist $p_i>0$ such that
\begin{equation}\label{eq:condition2}
\sum_{1\le i\le k} p_i \lambda_i(\mu)
>0 \end{equation} for all ergodic probability measures supported by $K$.
\end{itemize}
then $K$ is robustly unsaturated for (\ref{eq:main}). Alternatively,
if $x\mapsto A(x)$ is  twice continuously differentiable and $K$ is robustly unsaturated, then
(\ref{eq:condition}) holds for all ergodic measures $\mu$ supported
by $K$. \end{theorem}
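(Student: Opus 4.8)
The plan is to treat the three assertions separately: the equivalence of the two bulleted conditions, the sufficiency of either for robust unsaturation, and the converse necessity under the $C^2$ hypothesis.

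For the equivalence, note first that $\mu\mapsto\lambda_i(\mu)=\int\langle A_i(x)u_i(x),u_i(x)\rangle\,d\mu(x)$ is affine and weak$^*$-continuous, so the ergodic decomposition gives $\lambda_i(\mu)=\int\lambda_i(\mu_\omega)\,dP(\omega)$ for any invariant $\mu$. Hence \eqref{eq:condition2} for ergodic measures immediately yields $\sum_i p_i\lambda_i(\mu)>0$ for \emph{every} invariant $\mu$, and since the $p_i$ are positive this forces $\max_i\lambda_i(\mu)>0$, i.e.\ \eqref{eq:condition}. For the reverse implication I would apply a minimax argument. The set $M$ of invariant probability measures supported by $K$ is convex and weak$^*$-compact, and $\Delta=\{p\ge0:\sum_ip_i=1\}$ is compact convex; the pairing $(p,\mu)\mapsto\sum_ip_i\lambda_i(\mu)$ is bilinear, so the minimax theorem gives $\max_{p\in\Delta}\min_{\mu\in M}\sum_ip_i\lambda_i(\mu)=\min_{\mu\in M}\max_i\lambda_i(\mu)$, which is strictly positive by \eqref{eq:condition} and compactness. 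This produces $p\in\Delta$ with $\sum_ip_i\lambda_i(\mu)\ge c>0$ on $M$; replacing $p$ by $(1-\varepsilon)p+\varepsilon(1/k,\dots,1/k)$ and using the boundedness of the $\lambda_i$ on $M$ makes all weights strictly positive while keeping the sum positive, giving \eqref{eq:condition2}.

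For sufficiency, I would first convert the measure-theoretic hypothesis into a uniform finite-time estimate. Writing $h_i(x)=\langle A_i(x)u_i(x),u_i(x)\rangle$, an empirical-measure argument shows there exist $T>0$ and $\eta>0$ with $\frac1T\int_0^T\sum_ip_ih_i(x.s)\,ds\ge\eta$ for all $x\in K$: otherwise one extracts $x_n\in K$ and $T_n\to\infty$ with nonpositive averages, and the empirical measures $\frac1{T_n}\int_0^{T_n}\delta_{x_n.s}\,ds$ accumulate (by compactness of $K$) at an invariant $\mu$ on $K$ with $\sum_ip_i\lambda_i(\mu)\le0$, contradicting \eqref{eq:condition2}. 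Continuity of the finite-time average in $x$ then extends the bound to a neighborhood $U$ of $K$. I would then use $P(x)=\prod_i|x_i|^{p_i}$ as an average Lyapunov function: since the domination \eqref{perron} forces the directions $\widehat{x_i}=x_i/|x_i|$ of orbits near $\partial\C$ to align with the Perron directions $u_i(x)$, the average of $\frac{d}{dt}\ln P=\sum_ip_i\langle A_i(x)\widehat{x_i},\widehat{x_i}\rangle$ approaches $\sum_ip_ih_i$ and is therefore bounded below by a positive constant near $K$; Hofbauer's average Lyapunov function theorem then forces $W^s(K)\subset\partial\C$ together with isolation, so $K$ is unsaturated. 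Robustness follows because the Ruelle splitting $E_i,F_i$, the vectors $u_i,v_i$, the functions $h_i$, the constant $\eta$, and the continuation of the isolated invariant set $K$ all vary continuously with $A$, so the same $T,\eta,U$ work for every sufficiently small $\delta$-perturbation.

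For the converse I argue by contraposition: suppose some ergodic $\mu$ on $K$ has $\lambda_i(\mu)\le0$ for every $i$. By the second part of Proposition~\ref{prop:invasion2} there is a set $I$ of present populations with $\mu(\prod_{i\in I}\Ci^+)=1$ and $\lambda_i(\mu)=0$ for $i\in I$, while the absent populations $i\notin I$ satisfy $x_i=0$ on $\supp\mu$ and $\lambda_i(\mu)\le0$. Given $\delta>0$ I replace $A_i$ by $A_i-\varepsilon\,\id$ for the absent populations only, with $\varepsilon<\delta$; this is an admissible $\delta$-perturbation that leaves the dynamics on the face carrying $\mu$ unchanged, so $\mu$ persists as an invariant measure of $\widetilde A$ but now has $\widetilde\lambda_i(\mu)=\lambda_i(\mu)-\varepsilon<0$ for every absent $i$. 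Because these transverse exponents are strictly negative, the $C^2$ hypothesis lets me invoke a stable-manifold theorem for $\mu$: $\mu$-a.e.\ point of $\supp\mu$ has a local stable manifold tangent to the contracting transverse directions, and since the Perron vectors $u_i(x)$ point into $\Ci^+$, this manifold contains points with all coordinates positive. Any such point lies in $\C^+\cap W^s(\widetilde K)$, so the continuation $\widetilde K$ is saturated and $K$ is not robustly unsaturated, the desired contradiction.

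The main obstacle I expect is this last step: producing a genuine interior orbit converging to $K$ from only the sign of the averaged exponents. Because $\mu$ is merely ergodic, the transverse contraction is non-uniformly hyperbolic, so the stable-manifold construction requires Pesin/Ruelle theory and is exactly where the $C^2$ regularity enters; controlling the nonlinear coupling between the decaying absent populations and the persisting present ones, and verifying that the stable manifold genuinely reaches into $\C^+$ rather than remaining in $\partial\C$, is the delicate part. A secondary technical point, in the sufficiency direction, is making the alignment of orbit directions with the Perron vectors quantitative enough that $P=\prod_i|x_i|^{p_i}$ is a bona fide average Lyapunov function up to the boundary.
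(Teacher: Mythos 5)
Your treatment of the equivalence of \eqref{eq:condition} and \eqref{eq:condition2} (ergodic decomposition plus minimax) and of the necessity direction (perturb $A_i\mapsto A_i-\varepsilon\,\id$ for the absent indices, keep $\mu$ invariant on its face, drive the transverse exponents strictly negative, and invoke Pesin stable manifolds, which intersect $\C^+$ because the dominant directions $u_i(x)$ lie in $\Cip$) is essentially the paper's own argument; the only cosmetic difference is that the paper multiplies the perturbation by a bump function $\rho$ supported near $\Gamma(A)$ to keep the perturbation admissible, a detail you should include. Where you genuinely diverge is the sufficiency direction. The paper never builds a Lyapunov function: it argues by contraposition that a saturated $K$ carries a saturated invariant measure --- using the Hofbauer--So dichotomy (either some $y\in W^s(K)\cap\C^+$ exists, whose empirical measures $\frac1t\int_0^t\delta_{y.s}\,ds$ converge to a saturated $\mu$ on $K$, or $K$ is non-isolated and one takes weak$^*$ limits of ergodic measures on nearby positive $\omega$-limit sets) --- and then gets robustness for free from a compactness lemma (Lemma~\ref{lemma:sat}) stating that weak$^*$ limits of saturated measures of $\delta_n$-perturbations are saturated for the limit system. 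That route buys you uniformity under perturbation without ever having to track constants; your average-Lyapunov-function route is more constructive but obliges you to verify that $T$, $\eta$, the neighborhood $U$, and the Ruelle splitting all persist under $\delta$-perturbations.

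The one step in your sufficiency argument that is a genuine gap as written is the assertion that $P(x)=\prod_i|x_i|^{p_i}$ is an average Lyapunov function because ``the domination \eqref{perron} forces the directions $\widehat{x_i}$ to align with $u_i(x)$.'' The logarithmic derivative $\sum_i p_i\langle A_i(x)\widehat{x_i},\widehat{x_i}\rangle$ is not a continuous function on a neighborhood of $K$ in $\C$, and for a point $y$ near $K$ whose direction $\widehat{y_i}$ is far from $u_i(y)$ the finite-time average of $\dot P/P$ over $[0,T]$ need not be close to $\frac1T\int_0^T\sum_ip_ih_i(y.s)\,ds$: alignment is only asymptotic along forward orbits, and your uniform estimate is for a \emph{fixed} $T$. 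You either need the quantitative cone estimate $|B_i(t,x)w|\ge c\,\|B_i(t,x)|E_i(x)\|\,|w|$ uniformly over unit vectors $w\in\Ci$ and $t\ge T_0$ --- which does follow from \eqref{perron} together with the uniform positivity of $u_i$ and $v_i$ on the compact set, since the $E_i$-component of any $w\in\Ci$ is bounded below by $c_0|w|$ --- or you should replace $P$ by $\prod_i\langle v_i(x),x_i\rangle^{p_i}$, whose logarithmic derivative along orbits is exactly $\sum_ip_ih_i(x)$. This is not the ``secondary technical point'' you call it; it is the central difficulty the structured setting introduces, and it is precisely what the paper's measure-theoretic argument is designed to sidestep. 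With that estimate supplied, your route closes.
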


\begin{remark}\label{remark}
For some applications (e.g., the disease model considered in section~\ref{sect:applications:disease}), it useful to relax the irreducibility assumption \textbf{A2}. For instance, if there exists an open neighborhood $U$ of $K$ such that for each $i$, $A_i(x)$ has a fixed off-diagonal sign pattern for all $x\in U$, then $A_i(x)$ can be decomposed into a finite number, say $m_i$, of irreducible components. For each of these irreducible components, one can define $\lambda_i^j (\mu) = \int_\C \langle A_i^j(x) u_i^j(x),u_i^j(x)\rangle \,d\mu(x)$ where $A_i^j(x)$ is the submatrix of $A_i(x)$ corresponding to the $j$-th irreducible component of $A_i(x)$ and $u_i^j(x)$ is the continuous invariant subbundle for the irreducible linear cocycle determined by $A_i^j(x)$. If we define $\lambda_i(\mu)=\max_{1\le j\le m_i} \lambda_i^j (\mu)$, then all of the assertions of Theorem~\ref{thm:unsaturated} still hold.  
\end{remark}

\begin{proof}

To prove the first assertion of the theorem, we need the following
lemma. We call an invariant measure $\mu$ for (\ref{eq:main})
\emph{saturated} if $\lambda_i(\mu)\le 0$ for all $i$.

\begin{lemma}\label{lemma:sat} Let $\delta_n$ be a non-negative sequence that converges to zero as $n\rightarrow \infty$. If $\mu_n$ are saturated
invariant measures for $\delta_n$-perturbations of (\ref{eq:main}),
then the weak* limit points of $\{\mu_n\}_{n=1}^\infty$ is a
non-empty set consisting of saturated invariant measures for
(\ref{eq:main}).
\end{lemma}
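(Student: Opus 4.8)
The plan is to establish a compactness/continuity argument that combines weak* sequential compactness of probability measures with the continuity of the invasion rate functionals under perturbation. First I would observe that each $\mu_n$ is a Borel probability measure supported on the global attractor $\Gamma(\tilde A_n)$ of the $\delta_n$-perturbation, and since all $\Gamma(\tilde A_n)\subset N_{\delta_n}(\Gamma(A))$ and the $\delta_n\to 0$, these supports are eventually contained in a fixed compact neighborhood of $\Gamma(A)$. Hence $\{\mu_n\}$ is a tight family and, by Prohorov's theorem, has a weak* convergent subsequence; I would pass to such a subsequence and call its limit $\mu$. The first thing to verify is that the limit set is nonempty, which this compactness gives immediately, and that $\mu$ is a probability measure (no mass escapes to infinity, again by tightness).

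Next I would show that $\mu$ is invariant for the unperturbed flow \eqref{eq:main}. The natural way is to fix a bounded continuous test function $h$ and a time $t\ge 0$ and show $\int h(x.t)\,d\mu(x)=\int h(x)\,d\mu(x)$. I would write $\int h(x.t)\,d\mu_n$ in terms of the perturbed flow $x._n t$ (for which $\mu_n$ is invariant) and control the difference $|h(x._n t)-h(x.t)|$ using continuous dependence of solutions on the vector field: since $\|A-\tilde A_n\|\le\delta_n\to 0$ uniformly on the relevant neighborhood, the perturbed time-$t$ maps converge uniformly to the unperturbed time-$t$ map on compacts over the bounded time interval $[0,t]$. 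Combined with weak* convergence $\mu_n\to\mu$, this yields invariance of $\mu$ in the limit.

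The main obstacle, and the most delicate step, is showing that the limit measure $\mu$ is saturated, i.e. $\lambda_i(\mu)\le 0$ for all $i$. The difficulty is that $\lambda_i(\mu)=\int\langle A_i(x)u_i(x),u_i(x)\rangle\,d\mu(x)$ depends on the Ruelle invariant direction $u_i$, which is itself defined relative to the dynamics: each perturbed system has its own top direction $u_i^{(n)}$ depending on $\tilde A_n$, whereas the integrand in $\lambda_i(\mu)$ uses the unperturbed $u_i$. I would need to show that these subbundles converge appropriately, i.e. $u_i^{(n)}\to u_i$ in a suitable sense (uniformly on compact subsets of $\Gamma(A)$), which should follow from the robustness of the exponential splitting in \eqref{perron}: the Ruelle/Oseledets dominated splitting persists under $C^0$ perturbation with uniformly bounded constants $\alpha,\beta$, so the top bundles depend continuously on the cocycle. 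Granting this, the integrand $\langle \tilde A_n(x)u_i^{(n)}(x),u_i^{(n)}(x)\rangle$ converges uniformly to $\langle A_i(x)u_i(x),u_i(x)\rangle$, and since each $\lambda_i(\mu_n)\le 0$ by saturation of $\mu_n$, passing to the weak* limit gives $\lambda_i(\mu)\le 0$.

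To make the last convergence rigorous I would combine two facts: the uniform convergence of the integrands just described, and the weak* convergence $\mu_n\to\mu$; a standard estimate splits $\left|\int f_n\,d\mu_n-\int f\,d\mu\right|$ into $\left|\int(f_n-f)\,d\mu_n\right|$, bounded by $\sup|f_n-f|$, plus $\left|\int f\,d\mu_n-\int f\,d\mu\right|$, which vanishes by weak* convergence since $f=\langle A_i(\cdot)u_i,u_i\rangle$ is continuous and bounded on the compact neighborhood. This gives $\lambda_i(\mu)=\lim_n\lambda_i(\mu_n)\le 0$ for each $i$, so $\mu$ is saturated, completing the proof. I expect the continuity of the invariant subbundles $u_i$ under perturbation to be the crux, and I would either cite the robustness of Ruelle's splitting directly or prove it via a graph-transform/cone argument exploiting the uniform domination in \eqref{perron}.
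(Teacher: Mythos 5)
Your proposal is correct and follows essentially the same route as the paper: weak* compactness for existence of limit points, continuous dependence of the flow on the vector field for invariance of the limit measure, and uniform convergence of the perturbed invariant subbundles $u_i^{(n)}\to u_i$ (which the paper obtains directly from the continuity statement in Ruelle's Proposition 3.2) to pass the inequality $\lambda_i(\mu_n)\le 0$ to the limit. The step you correctly flag as the crux is handled in the paper exactly as you suggest, by citing the robustness of Ruelle's splitting rather than reproving it.
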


\begin{proof}
Let $\delta_n$ be a non-negative sequence that converges to zero.
Let $\mu_n$ be saturated invariant measures for $\delta_n$
perturbations, $\dot x= A^n(x)x$, of $\dot x=A(x)x$. By weak*
compactness of Borel probability measures supported on
$N_1(\Gamma(A))$, there exist weak* limit points of $\mu_n$. Let
$\mu$ be such a weak* limit point. Since $\mu_n$ are supported by
$N_1(\Gamma(A))$ for all $n$ sufficiently large, $\mu$ is supported by
$N_1(\Gamma(A))$. To verify that $\mu$ is an invariant measure for
(\ref{eq:main}), let $h:\C\to \R$ be a bounded continuous function. Let $t>0$
and $\epsilon>0$ be given. Choose $n$ sufficiently large so that
$|\int_\C h(x)\,d\mu(x)-\int_\C h(x)\,d\mu_n(x)|\le \epsilon$,
$|\int_\C h(x.t)\,d\mu(x)-\int_\C h(x.t)\,d\mu_n(x)|\le \epsilon$,
and $|h(x.t)-h(x^n.t)|\le \epsilon$ for all $x\in N_1(\Gamma(A))$.
Then $|\int_\C h(x.t)- h(x)\,d\mu(x)|$ is
\begin{eqnarray*}
&\le &|\int_\C h(x.t)\,d\mu(x)-\int_\C
h(x.t)\,d\mu_n(x)|+|\int_\C
h(x.t)\,d\mu_n(x)-\int_\C h(x) \,d\mu(x)|\\
&\le & \epsilon +\int_\C |h(x.t)- h(x^n.t)|\,d\mu_n(x)+|\int_\C
h(x^n.t)\,d\mu_n(x)-\int_\C h(x)
\,d\mu(x)|\\
&\le &  2\epsilon+|\int_\C h(x^n.t)-h(x)\,d\mu_n(x)|+ |\int_\C
h(x)\,d\mu_n(x)- \int_\C h(x) \,d\mu(x)|\\
& \le &3\epsilon
\end{eqnarray*}
where the last line follows from the invariance of $\mu_n$ for
$x^n.t$. Since $\epsilon>0$ and $t>0$ are arbitrary, we have
$\int_\C h(x.t)\,d\mu(x)=\int_\C h(x)\,d\mu(x)$ for all $t>0$ and
all bounded continuous functions $h:\C\to \R$. It follows that $\mu$ is an
invariant measure for (\ref{eq:main}). To see that $\mu$ is
saturated, define $h_i^n(x)=\langle A^n_i(x)u_i^n(x),u_i^n(x)\rangle$ where
$u_i^n(x)$ spans the invariant one-dimensional bundle given by
\cite[Prop.3.2]{ruelle-79} for $\frac{dx}{dt}=A^n(x)x$. Since $A^n_i(x)\to A_i(x)$ uniformly for $x\in
N_1(\Lambda(A))$ as $n\to \infty$, \cite[Prop.3.2]{ruelle-79} implies
that $u_i^n(x)\to u_i(x)$ converges uniformly for $x\in
N_1(\Lambda(A))$. Given $\epsilon>0$, choose $n$ sufficiently large
so that $|h_i^n(x)-h_i(x)|\le \epsilon$ for all $x\in
N_1(\Lambda(A))$ and  $|\int_\C h_i^n(x)\,d\mu(x)-\int_\C
h_i^n(x)\,d\mu_n(x)|\le \epsilon$. Then
\begin{eqnarray*}
\lambda_i(\mu)&\le & \int_\C |h_i(x)-h_i^n(x)|\,d\mu(x)+|\int_\C
h_i^n(x)\,d\mu(x)-\int_\C h_i^n(x)\,d\mu_n(x)|\\&&+\int_\C
h_i^n(x)\,d\mu_n(x)\\
&\le& 2 \epsilon
\end{eqnarray*}
Since $\epsilon>0$ is arbitrary, $\lambda_i(\mu)\le 0$ for all $i$ and
$\mu$ is saturated.
\end{proof}

Using this lemma, we prove that if $K$ is saturated, then there
exists a saturated invariant measure $\mu$ supported by $K$. Assume
$K$ is saturated. Work of Hofbauer and So~\cite[Thm.2.1]{hofbauer-so-89} implies that either
there exists $y\in W^s(K)\cap \C^+$ or $K$ is not isolated for the
unrestricted flow $x.t$. If there exists $y\in W^s(K)\cap \C^+$,
then for all $t>0$ define $\nu_t=\frac{1}{t}\int_0^t
\delta_{y.s}\,ds$ where $\delta_{y.s}$ denotes a Dirac measure based
at the point $y.s$. Dissipativeness of (\ref{eq:main}) and weak*
compactness of the Borel probability measures supported by
$\Lambda(A)$ imply there exist $t_k\to \infty$ such that the
sequence $\nu_{t_k}$ converges in the weak* topology to a Borel
probability measure $\mu$ with support in $K$. A standard argument
implies that $\mu$ is $x.t$ invariant. Define $h_i:\C\to\R$ by
$h_i(x)=\langle A_i(x)u_i(x),u_i(x)\rangle $. Proposition~\ref{prop:invasion rate2}
and weak* convergence
imply that
\begin{eqnarray*}
0\ge \lambda_i(y)&=&\limsup_{t\to\infty}\frac{1}{t}\int_0^t h_i(y.s)\,ds\\
&=&\limsup_{t\to\infty} \int_\C h_i\,d\nu_t \\
&\ge & \limsup_{k\to\infty} \int_\C h_i\,d\nu_{t_k} = \lambda_i(\mu)
\end{eqnarray*}
for all $i$. Alternatively, suppose that $K$ is not isolated for
the semiflow. Then there exists a sequence of positive $\omega$-limit sets
that accumulate on $K$. Let $\mu_n$ be a sequence of ergodic
probability measures supported by these $\omega$-limit sets.
Propositions~\ref{prop:invasion rate} and \ref{prop:invasion2} imply
that $\mu_n$ are saturated for all $n$. Applying
Lemma~\ref{lemma:sat} with $\delta_n=0$ for all $n$ implies that
there exists a saturated invariant measure $\mu$ supported by $K$.
Hence, we have shown that if $K$ is saturated, then there exists a
saturated invariant probability measure $\mu$ supported by $K$.
Equivalently, (\ref{eq:condition}) holding for all invariant
probability measures with support in $K$ implies that $K$ is
unsaturated.

Next, we show that if $K$ is not robustly saturated, then there
exists a saturated invariant measure supported by $K$. Indeed,
suppose $K$ is not robustly saturated. Then there exists a
non-negative sequence $\delta_n$ converging to zero and a sequence
of saturated measures $\mu_n$ for $\delta_n$-perturbations of
(\ref{eq:main}) with support in the continuation of $K$. Let $\mu$
be a weak* limit point of $\{\mu_n\}_{n=1}^\infty$.
Lemma~\ref{lemma:sat} implies that $\mu$ is saturated. Moreover,
since the continuation of $K$ converges to $K$ as $\delta_n\to 0$,
$\mu$ is a saturated invariant measure for (\ref{eq:main}) supported
by $K$. Hence, we have shown that if $K$ is not robustly saturated,
then there exists a saturated invariant measure supported by $K$.
Equivalently, if (\ref{eq:condition}) holds for all invariant
measures with support in $K$, then $K$ is robustly unsaturated.
 
To see the equivalence of the conditions given by \eqref{eq:condition}  and \eqref{eq:condition2},
let $\Delta=\{p\in \R^k_+: \sum_i p_i =1\}$ and notice that
\[
\min_\mu \max_i \lambda_i (\mu) = \min_\mu\max_{p\in \Delta} \sum_i p_i \lambda_i(\mu)
\]
where the minimum is taken over invariant probability measures $\mu$ with support in $K$.
The Minimax theorem (see, e.g., \cite{simmons-98}) implies that
\begin{equation}\label{eq:minmax}
\min_\mu \max_i \lambda_i (\mu) = \max_{p\in \Delta} \min_\mu \sum_i p_i \lambda_i(\mu)
\end{equation}
where the minimum is taken over invariant probability measures $\mu$ with support in $K$. Since $\min_\mu \sum_i p_i \lambda_i(\mu)$
is attained at an ergodic probability measure with support in $K$, the equivalence of the conditions
given by \eqref{eq:condition} and \eqref{eq:condition2} is established.

To prove the final assertion of the Theorem, assume $x\mapsto A(x)$
is twice continuously differentiable and there exists a saturated ergodic measure $\mu$
supported by $K$. Proposition~\ref{prop:invasion2} implies that
there exists $I\subset \{1,\dots,k\}$ such that $\mu(\prod_{i\in I}
\Ci^+)=1$ and $\lambda_i(\mu)=0$ for all $i\in I$. Since $K\subset
\partial \C$, $\{1,\dots,k\}\setminus I$ is non-empty. We will show
that (\ref{eq:main}) is not robustly permanent by proving that for
all $\delta>0$ there exists a $\delta$-perturbation of
(\ref{eq:main}) for which the continuation of $K$ is saturated. Let
$\delta>0$ be given. Choose $V \subset N_\delta(\Gamma(A))$ to be a
compact neighborhood of $\Gamma(A)$ such that $V.t\subset \I V$ for
all $t>0$. Let $\eta>0$ be such that $W=:N_\eta(\Gamma(A))\subset
V$. Let $\rho:\C\to[0,1]$ be a $C^\infty$ function such that
$\rho(x)=1$ for all $x\in \Gamma(A)$ and $\rho(x)=0$ for all $x\in
\C\backslash W$.Define $\tilde A=(\tilde A_1,\dots,\tilde A_k)$ by
\[
    \tilde A_i(x)= \left\{
    \begin{array}{ll}
    A_i(x)&\mbox{if }i\in I\\
    A_i(x)-\frac{\delta}{2}\id \rho(x)&\mbox{if }i \notin I.
    \end{array}
    \right.
  \]
where $\id$ denotes the identity matrix of appropriate dimension.
Let $\widetilde {x.t}$ denote the semiflow of $\dot x = \tilde A(x)x$.
Since $\widetilde {x.t}=x.t$ whenever $x.[0,t]\in \C\setminus W$, it
follows that $\widetilde{V.t}\subset \I V$ for all $t>0$. Hence,
$\Gamma(\tilde A) \subset V \subset N_\delta(\Gamma(A)))$. We also
have $\|A(x)-\tilde A(x)\|\le \frac{\delta}{2}$. Therefore, $\dot x
= \tilde A(x)x$ is a $\delta$-perturbation of (\ref{eq:main}). By
construction, $x.t=\widetilde{x.t}$ for all $x\in K$ and $t\ge 0$.
Hence, $\mu$ is ergodic for the semiflow of $\dot x = \tilde A(x)x$ and $\lambda_i(\mu)\le
-\frac{\delta}{2}$ for this semiflow and $i\in \{1,\dots,
k\}\setminus I$. Let ${\mathcal L}$ and $O$ be the Lyapunov
exponents and Oseledec regular points supported by $\mu$ for $\dot x= \tilde A(x)x$ (see, e.g., \cite{pugh-shub-89} for definitions).
At each point $x\in O$, the splitting of $\R^n$ determines three
subspaces: the stable subspace
$E^s(x)$, the center subspace
$E^c(x)$ and the unstable subspace
$E^u(x)$.
Proposition~\ref{prop:invasion2} and our choice of $\tilde A$ imply
that $E^s(x)\cap \C^+\neq \emptyset $. The Pesin stable manifold
theorem \cite[Corollaries 3.17 and 3.18]{pugh-shub-89} implies that
tangent to $E^s(x)$, $E^c(x)$ and $E^u(x)$ are locally $\widetilde
{x.t}$-invariant families of $C^1$ discs ${\mathcal W}^s_x$,
${\mathcal W}^c_x$ and ${\mathcal W}^u_x$ corresponding to the
stable, center and unstable manifolds. The family of stable
manifolds ${\mathcal W}^s_x$ is contained in $W^s(\tilde A,\supp(
\mu))$. Since $E^s(x)\cap \C^+\neq \emptyset$, ${\mathcal W}^s_x
\cap \C^+\neq \emptyset$ for some $x\in K$. Consequently, $K$ is
saturated for $\dot x=\tilde A(x)x$.
\end{proof}

The difference between the sufficient and necessary condition is
that for the sufficient condition, (\ref{eq:condition}) has to hold
for \emph{all invariant} measures supported by $K$, while for the
necessary condition, (\ref{eq:condition}) has to hold only for
\emph{ergodic} measures supported by $K$. Since the invariant
measures lie in the convex hull of the ergodic measures, the
sufficient condition can be  more restrictive than the necessary
condition.

For three classes of invariant sets, the necessary and
sufficient conditions coincide. Recall, a compact invariant set $K$
for \eqref{eq:main} is \emph{uniquely ergodic} if $K$ only supports one
invariant measure. Recall a compact invariant set $K$ is \emph{Axiom
A} if the flow of \eqref{eq:main} restricted to $K$ is transitive and  hyperbolic (see, e.g., \cite{shub-87} for definitions).

\begin{corollary}\label{cor:axiomA}
Assume $x\mapsto A(x)$ is  twice continuously differentiable. If $K\subset \partial \C$ is a compact invariant set and either it is
uniquely ergodic or Axiom A, then $K$ is robustly unsaturated if and
only if (\ref{eq:condition}) holds for all ergodic measures $\mu$
with support in $K$.
\end{corollary}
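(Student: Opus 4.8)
The plan is to read the corollary off Theorem~\ref{thm:unsaturated} by showing that, for each of the two special classes, requiring (\ref{eq:condition}) only for \emph{ergodic} measures already forces it for \emph{all invariant} measures; this closes the gap between the sufficient and necessary halves of the theorem. The ``only if'' implication requires nothing new, as it is precisely the final assertion of Theorem~\ref{thm:unsaturated}: since $x\mapsto A(x)$ is $C^2$, a robustly unsaturated $K$ satisfies (\ref{eq:condition}) for every ergodic $\mu$ supported by $K$. Hence the entire content is the ``if'' direction, and by the sufficient half of the theorem it is enough to establish the implication that $\max_i\lambda_i(\nu)>0$ for all ergodic $\nu$ supported by $K$ forces $\max_i\lambda_i(\mu)>0$ for all invariant $\mu$ supported by $K$.

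For the uniquely ergodic case this implication is vacuous: $K$ supports a single invariant measure, that measure is ergodic, and the two quantifications coincide. The Axiom~A case carries the weight. Here I would use two ingredients. First, the map $\mu\mapsto\lambda_i(\mu)=\int_\C\langle A_i(x)u_i(x),u_i(x)\rangle\,d\mu(x)$ is continuous in the weak$^*$ topology, because $A_i$ is continuous by \textbf{A1} and the bundle $u_i$ furnished by \cite[Prop.3.2]{ruelle-79} is continuous on $\Gamma(A)\supseteq K$; consequently $f(\mu)=\max_i\lambda_i(\mu)$ is weak$^*$ continuous as a finite maximum. Second, Sigmund's density theorem for transitive hyperbolic sets asserts that the ergodic measures carried by periodic orbits are weak$^*$ dense in the simplex of invariant measures on $K$. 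Together these yield
\[
\inf_{\nu\ \mathrm{ergodic}} f(\nu)=\min_{\mu\ \mathrm{invariant}} f(\mu),
\]
the right-hand minimum being attained by weak$^*$ compactness. So if (\ref{eq:condition}) fails for some invariant measure, this common value is $\le 0$; when it is strictly negative, approximating a minimizing invariant measure by periodic ergodic measures and invoking continuity of $f$ produces an ergodic measure with $f<0$, contradicting the ergodic hypothesis. This settles every case except equality.

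The hard part will be the borderline case in which $\min_{\mu}f(\mu)=0$ exactly. Density and continuity then only furnish ergodic measures with $f(\nu)\to 0$, not a single ergodic measure with $f(\nu)\le 0$, because ergodicity survives neither weak$^*$ limits nor convex combinations. To eliminate this obstruction I would press the Axiom~A structure harder than bare density allows, realizing boundary values of the convex ``rotation set'' $\{(\lambda_1(\mu),\dots,\lambda_k(\mu)):\mu\ \text{invariant}\}$ by ergodic measures via the specification property of transitive hyperbolic sets; this replaces a saturated invariant measure lying on the boundary of the rotation set by a saturated ergodic one, contradicting the ergodic hypothesis. An alternative route is to re-enter the perturbation argument of Theorem~\ref{thm:unsaturated} and use structural stability of the hyperbolic set $K$: a saturated \emph{ergodic} measure for each small perturbation should persist, under the conjugacy, to a saturated ergodic measure for (\ref{eq:main}) in the limit. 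Either device disposes of the equality case and completes the Axiom~A half of the corollary.
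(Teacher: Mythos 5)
Your route is in outline the paper's own: the ``only if'' half is exactly the final assertion of Theorem~\ref{thm:unsaturated}, the uniquely ergodic case is vacuous, and the Axiom~A case rests on Sigmund's theorem that periodic-orbit measures are weak* dense in the invariant measures of $K$, combined with weak* continuity of $\mu\mapsto\lambda_i(\mu)$ (which holds since $A_i$ and the Ruelle bundle $u_i$ are continuous on the compact set $K$). Where you genuinely add something is in noticing that density plus continuity only gives
\[
\inf_{\nu\ \mathrm{ergodic}}\max_i\lambda_i(\nu)\;=\;\min_{\mu\ \mathrm{invariant}}\max_i\lambda_i(\mu),
\]
and therefore only excludes a saturated invariant measure when this common value is strictly negative. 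The paper's one-line conclusion (``any invariant measure can be approximated by an ergodic measure and, consequently, the result follows'') silently skips the borderline case in which some invariant $\mu$ has $\max_i\lambda_i(\mu)=0$ while every ergodic measure still satisfies \eqref{eq:condition}. Because $\max_i\lambda_i(\cdot)$ is convex rather than affine, its minimum over the simplex of invariant measures need not be attained at an extreme (ergodic) point, so this is a real gap in the density argument, and your instinct to isolate it is correct. Your proposed repair --- realizing every point of the convex set $\{(\lambda_1(\mu),\dots,\lambda_k(\mu)):\mu\ \text{invariant}\}$ by an \emph{ergodic} measure via the specification property of transitive hyperbolic sets --- is the right tool and does close the case of equality; it is strictly more than the paper supplies.

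Two caveats on your write-up. First, the specification step is asserted, not proved: as it stands your argument is complete only modulo a citation to a rotation-vector realization theorem, and for flows one must state specification in the correct form (a transitive hyperbolic set that is a constant-time suspension is not topologically mixing). Second, your fallback via structural stability is not convincing: conjugating a perturbed flow back to $K$ transports measures but also changes the cocycles $A_i$, and a weak* limit of saturated \emph{ergodic} measures need not be ergodic --- which is precisely the difficulty you are trying to circumvent. I would discard that alternative and commit to the specification argument, stated as a lemma with a precise reference.
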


\begin{proof}
If $K$ is uniquely ergodic, then the assertion follows immediately. For an Axiom A invariant set, Sigmund~\cite[Thm.1]{sigmund-72}
has proven that invariant measures supported by periodic orbits of $K$ are dense in the set of
invariant probability measures supported by $K$. In particular, any invariant measure can be approximated by an ergodic measure and, consequently, the result follows.
\end{proof}

Another special case where the necessary and sufficient conditions
coincide is when $K$ supports all populations except one. For discrete-time models,
this  case was considered by \cite{salceanu-smith-09}.

\begin{corollary}\label{cor:cod1}
Assume $x\mapsto A(x)$ is  twice continuously differentiable. Let $K\subset \prod_{i\ge 2} \Ci^+$ be a compact invariant set. Then
$K$ is robustly unsaturated if and only if $\lambda_1(\mu)>0$ for
all ergodic measures supported by $K$.
\end{corollary}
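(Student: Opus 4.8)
The plan is to derive Corollary~\ref{cor:cod1} from Theorem~\ref{thm:unsaturated} by showing that, under the codimension-one hypothesis $K\subset \prod_{i\ge 2}\Ci^+$, the general necessary and sufficient conditions of the theorem collapse to the single scalar condition $\lambda_1(\mu)>0$. The key structural observation is that on $K$ every population except the first persists: since $K\subset \prod_{i\ge 2}\Ci^+$, for each $i\ge 2$ the coordinate $(x.t)_i$ stays bounded away from zero along orbits in $K$, so by the third assertion of Proposition~\ref{prop:invasion rate2} we have $\lambda_i(x)=0$ for all $x\in K$ and all $i\ge 2$. Integrating against any invariant measure $\mu$ supported by $K$ and using Proposition~\ref{prop:invasion2}, this gives $\lambda_i(\mu)=0$ for every $i\ge 2$. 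Hence on $K$ the only invasion rate that can be nonzero is $\lambda_1(\mu)$.

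With this in hand, the backward direction is almost immediate: if $\lambda_1(\mu)>0$ for all \emph{ergodic} $\mu$ supported by $K$, then since $\lambda_i(\mu)=0$ for $i\ge 2$ we get $\max_{1\le i\le k}\lambda_i(\mu)=\lambda_1(\mu)>0$ for all ergodic $\mu$. I would then promote this from ergodic to arbitrary invariant measures. Every invariant measure $\mu$ on $K$ has an ergodic decomposition, and $\lambda_1$ is affine in $\mu$ (it is the integral of the fixed continuous function $x\mapsto\langle A_1(x)u_1(x),u_1(x)\rangle$), so $\lambda_1(\mu)=\int \lambda_1(\nu)\,d\tau(\nu)$ over the ergodic components $\nu$; if each $\lambda_1(\nu)>0$ then $\lambda_1(\mu)>0$. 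Thus condition~\eqref{eq:condition} of Theorem~\ref{thm:unsaturated} holds for \emph{all} invariant measures on $K$, and the theorem yields that $K$ is robustly unsaturated. (Alternatively, I could invoke the equivalent form \eqref{eq:condition2} directly with the weight vector $p=(1,0,\dots,0)$, which avoids the decomposition argument entirely, since $\sum_i p_i\lambda_i(\mu)=\lambda_1(\mu)$.)

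The forward direction follows from the necessity clause of Theorem~\ref{thm:unsaturated}: assuming $x\mapsto A(x)$ is twice continuously differentiable and $K$ is robustly unsaturated, that clause gives \eqref{eq:condition}, namely $\max_{1\le i\le k}\lambda_i(\mu)>0$ for all ergodic $\mu$ supported by $K$. Since $\lambda_i(\mu)=0$ for every $i\ge 2$ by the persistence observation above, the maximum is realized only by $\lambda_1(\mu)$, so $\max_i\lambda_i(\mu)=\lambda_1(\mu)$, forcing $\lambda_1(\mu)>0$ for all ergodic $\mu$. This is exactly the stated condition.

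The only genuinely delicate point is the claim that $\lambda_i(\mu)=0$ for $i\ge 2$, which rests on verifying that orbits in the \emph{compact} invariant set $K\subset\prod_{i\ge 2}\Ci^+$ really do keep each coordinate $(x.t)_i$ uniformly bounded away from zero; I expect this to be the main obstacle to state cleanly, though it reduces to compactness of $K$ together with $K\subset\prod_{i\ge 2}\Ci^+$, guaranteeing a positive lower bound on $|(x.t)_i|$ for $i\ge2$ and hence $\limsup_{t\to\infty}|(x.t)_i|>0$, so Proposition~\ref{prop:invasion rate2} applies. Everything else is a direct specialization of the theorem, with the choice $p=(1,0,\dots,0)$ making the equivalence transparent.
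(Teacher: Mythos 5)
Your proof is correct and follows essentially the same route as the paper: both reduce to $\lambda_1$ by showing $\lambda_i(\mu)=0$ for $i\ge 2$ (you via compactness of $K\subset\prod_{i\ge2}\Ci^+$ and Proposition~\ref{prop:invasion rate2}, the paper by citing the second assertion of Proposition~\ref{prop:invasion2} directly --- the same underlying fact), then pass between ergodic and invariant measures by the ergodic decomposition theorem and invoke both directions of Theorem~\ref{thm:unsaturated}. The only nitpick is your parenthetical alternative using $p=(1,0,\dots,0)$ in \eqref{eq:condition2}, which formally requires all $p_i>0$; this is harmless since $\lambda_i(\mu)=0$ for $i\ge2$ makes any strictly positive weight vector work just as well.
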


\begin{proof}
Since $K\subset \prod_{i\ge 2}\Ci^+$,
Proposition~\ref{prop:invasion2} implies that $\lambda_i(\mu)=0$ for
all ergodic measures $\mu$ supported by $K$ and $2\le i\le k$. The
ergodic decomposition theorem implies that $\lambda_i(\mu)=0$ for
all invariant measures $\mu$ supported by $K$ and $2\le i\le k$.
Therefore, for any invariant measure $\mu$ supported by $K$, \eqref{eq:condition} holds if
and only if $\lambda_1(\mu)>0$. The ergodic decomposition theorem implies
  $\lambda_1(\mu)>0$ for all invariant measures supported
by $K$ if and only if $\lambda_1(\mu)>0$ for all ergodic measures
supported by $K$.
\end{proof}

\section{Morse decompositions and robust permanence\label{sect:permanence}}

To state the sufficient condition for robust permanence, we use a characterization of permanence due to Garay~\cite{garay-89} and Hofbauer and
So~\cite{hofbauer-so-89} that involves Morse
decompositions of the boundary flow.  Conley~\cite{conley-78}
defined a collection of sets
$\{M_1,\dots, M_m\}$ to be a {\em Morse decomposition} for a compact
invariant set $K$ if \begin{itemize} \item $M_1,\dots,M_m$ are
pairwise disjoint, compact isolated invariant sets for the flow of \eqref{eq:main}
 restricted to $K$. \item For each $x\in K$ there are integers
$r=r(x)\le s=s(x)$ such that
 $\alpha(x)\subseteq M_r$ and $\omega(x)\subseteq M_s$.
 \item If
$r(x)=s(x)$, then $x\in M_r$. \end{itemize} Garay, Hofbauer and
So~\cite{garay-89,hofbauer-so-89} proved the following
characterization of permanence.

\begin{theorem}[Garay, Hofbauer-So]\label{thm:permanence} If
$\{M_1,\dots, M_m\}$ is a Morse decomposition for $\partial \C$,
then (\ref{eq:main}) is permanent if and only if each of the
components $M_i$ are unsaturated.
\end{theorem}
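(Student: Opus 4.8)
The plan is to prove both implications by reducing permanence to the statement that $\partial\C$ is a \emph{uniform repeller}: under the dissipativity assumption \textbf{A4}, the flow of \eqref{eq:main} is permanent if and only if there is $\eta>0$ with $\liminf_{t\to\infty} d(x.t,\partial\C)\ge\eta$ for every $x\in\C^+$, an equivalence that is standard for dissipative semiflows (see \cite{hofbauer-so-89}). Two structural facts carry the argument. First, the single-component ($m=1$) case is exactly the criterion \cite[Thm.~2.1]{hofbauer-so-89} already used in the proof of Theorem~\ref{thm:unsaturated}: an isolated invariant $K\subset\partial\C$ with $W^s(K)\subseteq\partial\C$ locally repels interior orbits, whereas a saturated $K$ admits either an interior point in $W^s(K)$ or a failure of isolation. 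The present theorem is the upgrade of this to a finite Morse decomposition. Second, every Morse decomposition is automatically \emph{acyclic} (Conley \cite{conley-78}): the integers $r(x)\le s(x)$ order the $M_i$ so that no finite chain of connecting orbits among them can close into a cycle. I will also use that an omega limit set is compact, connected and internally chain transitive, and that a chain transitive subset of a set carrying a Morse decomposition lies in a single component.

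For necessity, assume \eqref{eq:main} is permanent, so $\partial\C$ is a uniform repeller. Then no interior orbit accumulates on $\partial\C$, and since $M_i\subset\partial\C$ this gives $W^s(M_i)\cap\C^+=\emptyset$ for each $i$. To see that each $M_i$ is isolated for the full flow, note that $M_i$ is isolated inside $\partial\C$ by the definition of a Morse decomposition; a strictly larger invariant set accumulating on $M_i$ would have to contain either interior orbits staying near $\partial\C$ for all time, which the uniform repulsion forbids, or extra boundary points, which isolation within the boundary flow forbids. Hence each $M_i$ is unsaturated.

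For sufficiency, the hard direction, assume every $M_i$ is unsaturated and suppose for contradiction that $\partial\C$ fails to be a uniform repeller. By \textbf{A4} and a standard extraction argument (cf. \cite{butler-waltman-86,hofbauer-so-89}) there is then $x\in\C^+$ whose omega limit set $\Omega=\omega(x)$ meets $\partial\C$. If $\Omega\subseteq\partial\C$, then $\Omega$, being chain transitive, lies in a single component $M_j$, so $x\in W^s(M_j)\cap\C^+$, contradicting unsaturation. Otherwise $\Omega$ meets both $\C^+$ and $\partial\C$; then $\Omega\cap\partial\C$ is a nonempty compact invariant set meeting some $M_j$ while $\Omega\not\subseteq M_j$. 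The Butler--McGehee lemma then produces, inside $\Omega$, connecting orbits entering and leaving neighborhoods of successive components, and the acyclic Morse ordering together with compactness forces this chain to terminate at an orbit passing through $\C^+$ whose omega limit lies in some boundary component $M_\ell$. That orbit exhibits a point of $W^s(M_\ell)\cap\C^+$, contradicting the unsaturation of $M_\ell$. Hence no such $x$ exists, $\partial\C$ is a uniform repeller, and with \textbf{A4} the flow is permanent.

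I expect the genuine obstacle to be this final chaining step: turning the mere recurrence of $x.t$ near $\partial\C$ into an honest connecting orbit that both visits $\C^+$ and limits onto a single Morse component. The delicate points are the repeated application of the Butler--McGehee lemma to locate points of $\Omega$ in the stable and unstable sets of successive components, and the use of acyclicity to rule out cycles and thereby force an interior orbit into some $W^s(M_\ell)$; the subtle extraction of a boundary-accumulating interior orbit from the failure of uniform repulsion is part of the same difficulty. By contrast, the necessity direction and the two reductions (to uniform repulsion and to the single-component criterion) are comparatively routine.
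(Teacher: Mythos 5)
The paper does not actually prove Theorem~\ref{thm:permanence}; it quotes it from Garay~\cite{garay-89} and Hofbauer--So~\cite{hofbauer-so-89}, so your attempt must be measured against those proofs. Your route --- reduce permanence to uniform repulsion of $\partial \C$, then descend through the Morse components with the Butler--McGehee lemma --- is the Butler--Waltman style argument, whereas the cited proofs work with the repeller characterization (the same \cite[Thm.~2.1]{hofbauer-so-89} invoked in the proof of Theorem~\ref{thm:unsaturated}) and assemble $\partial\C$ inductively from the attractor--repeller pairs underlying the Morse decomposition. Your necessity direction is fine, and the descending chain in the sufficiency direction is essentially right, though the contradiction it delivers is not that ``the chain terminates at an interior orbit'': unsaturation forces every Butler--McGehee point of $W^s(M_{j})\setminus M_{j}$ onto the boundary at every stage, and the contradiction is that the strictly decreasing sequence of Morse indices can neither terminate nor continue past $m$ steps.

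The genuine gap is the extraction step: ``if $\partial\C$ fails to be a uniform repeller, there is $x\in\C^+$ whose omega limit set meets $\partial\C$.'' For a general dissipative semiflow this is false. Take $\dot x = x f(x)$ on $[0,\infty)$ with positive equilibria $p_n\downarrow 0$: every interior orbit converges to some $p_n>0$, so no interior omega limit set meets $\{0\}$, yet $\inf_{x>0}\liminf_{t\to\infty} x.t=0$. (Here $\{0\}$ fails to be isolated, so the theorem is not contradicted, but the example shows the extraction cannot follow from \textbf{A4} and ``a standard argument'' alone.) What your argument actually establishes is persistence --- no interior orbit is attracted to the boundary --- and upgrading persistence to uniform persistence is precisely the nontrivial content of the Butler--Waltman/Hofbauer--So theorems. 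Closing it requires re-using the isolation of the $M_i$, applied not to a single $\omega(x)$ but to a sequence of interior omega limit sets $\omega(x_n)$ that approach $\partial\C$ without meeting it: one shows their upper limit is a compact invariant set meeting some $M_j$, and an isolation/generalized Butler--McGehee argument then forces $\omega(x_n)\subseteq M_j$ for large $n$, contradicting $W^s(M_j)\cap\C^+=\emptyset$. You flag this at the end as ``part of the same difficulty,'' but in the body it is asserted as a routine citation, and as stated it is where the real work of the theorem lives.
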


Theorem~\ref{thm:unsaturated} and \ref{thm:permanence} imply the
following result:

\begin{theorem}\label{thm:rp}
If $\{M_1,\dots, M_m\}$ is a Morse decomposition for $\partial \C$
and (\ref{eq:condition}) holds for each of the components of the
Morse decomposition, then \eqref{eq:main} is robustly permanent.
Conversely, if $x\mapsto A(x)$ is  twice continuously differentiable and \eqref{eq:condition} is violated by an ergodic measure supported by
one of the components of the Morse decomposition, then \eqref{eq:main} is not robustly permanent. 
\end{theorem}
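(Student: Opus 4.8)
The plan is to read Theorem~\ref{thm:rp} as a bookkeeping combination of the robust-unsaturation criterion of Theorem~\ref{thm:unsaturated} with the Garay--Hofbauer--So characterization of permanence in Theorem~\ref{thm:permanence}, proving the two implications separately. For the sufficient direction I would first apply Theorem~\ref{thm:unsaturated} componentwise: since \eqref{eq:condition} holds for every invariant measure supported by $M_i$, each $M_i$ is robustly unsaturated, so there is $\delta_i>0$ for which the continuation of $M_i$ is unsaturated under every $\delta_i$-perturbation; set $\delta=\min_i\delta_i$. I would then note that any $\delta$-perturbation $\tilde A$ still satisfies \textbf{A2}, so the faces of $\C$, and hence $\partial \C$, remain invariant under the perturbed semiflow. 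Because a Morse decomposition is equivalent to a finite filtration by attractors of the boundary flow, and attractors persist under $C^0$-small perturbations, after shrinking $\delta$ if necessary the continuations $\{M_1(\tilde A),\dots,M_m(\tilde A)\}$ form a Morse decomposition of $\partial \C$ for the $\tilde A$-flow whose components are all unsaturated. Theorem~\ref{thm:permanence}, applied to $\tilde A$, then yields that every $\delta$-perturbation is permanent.

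The residual and, I expect, principal difficulty in the sufficient direction is the \emph{uniform} repulsion distance: robust permanence requires a single $\epsilon>0$ with $d(\Lambda(\tilde A),\C\setminus\C^+)>\epsilon$ for all $\delta$-perturbations simultaneously, not merely $\Lambda(\tilde A)\subset\C^+$ for each. I would seek this by contradiction in the spirit of Lemma~\ref{lemma:sat}. If no such $\epsilon$ existed, there would be $\delta_n\downarrow 0$, $\delta_n$-perturbations $A^n$, and invariant measures $\mu_n$ supported on the positive attractors $\Lambda(A^n)\subset\C^+$ whose supports approach $\partial \C$. On each $\Lambda(A^n)$ all populations are bounded away from extinction, so Proposition~\ref{prop:invasion rate2} forces $\lambda_i(\mu_n)=0$ for all $i$, and passing to a weak* limit $\mu$ while invoking the joint continuity of the $\lambda_i$ in the measure and the perturbation (exactly the estimate in Lemma~\ref{lemma:sat}) yields an invariant measure $\mu$ for \eqref{eq:main} with $\lambda_i(\mu)=0$ for all $i$, i.e.\ a saturated measure.

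The crux, and the step I expect to be the main obstacle, is to arrange that this limit $\mu$ is supported on $\partial \C$: a point lying near the boundary need not have an orbit, hence an invariant measure, concentrated near the boundary, so saturation and proximity to $\partial \C$ do not automatically coincide. I would confront this either by choosing the $\mu_n$ to concentrate on the part of $\Lambda(A^n)$ closest to $\partial \C$ and exploiting the uniform isolating neighborhoods supplied by robust unsaturation to force the limit onto $\partial \C$, whence onto some component $M_j$, contradicting \eqref{eq:condition}; or, equivalently, by building the repelling neighborhood of $\partial \C$ in the proof of Theorem~\ref{thm:permanence} out of those same uniform isolating neighborhoods, so that one neighborhood serves all $\delta$-perturbations at once and produces the common $\epsilon$. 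Either way the whole weight of the argument is in making every neighborhood independent of $\tilde A$.

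For the necessary direction the argument is essentially immediate from the converse half of Theorem~\ref{thm:unsaturated}. If $x\mapsto A(x)$ is twice continuously differentiable and \eqref{eq:condition} is violated by an ergodic measure supported by some component $M_j$, then $M_j$ is not robustly unsaturated; indeed, the explicit perturbation built in the proof of Theorem~\ref{thm:unsaturated}, which damps the populations outside the index set $I$ of that measure, realizes for every $\delta>0$ a $\delta$-perturbation $\tilde A$ whose continuation of $M_j$ is saturated with stable manifold meeting $\C^+$. Any interior point $y\in\C^+$ so produced satisfies $\omega(y)\subseteq M_j\subset\partial \C$, so no positive attractor of $\tilde A$ attracts $y$ from within $\C^+$ and $\tilde A$ fails to be permanent. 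As this occurs for arbitrarily small $\delta$, \eqref{eq:main} is not robustly permanent.
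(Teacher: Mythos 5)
Your proposal is correct and follows essentially the same route as the paper, whose entire proof of this theorem is the one-sentence assertion that Theorems~\ref{thm:unsaturated} and~\ref{thm:permanence} imply it: apply the robust-unsaturation criterion to each Morse component and feed the result into the Garay--Hofbauer--So characterization, with the converse coming directly from the necessity half of Theorem~\ref{thm:unsaturated}. The two points you single out --- that the continuations of the $M_i$ form a Morse decomposition of $\partial\C$ for every small perturbation, and that the repelling neighborhood (the uniform $\epsilon$ in the definition of robust permanence) must be chosen independently of $\tilde A$ --- are exactly the details the paper leaves implicit, and your proposed treatments of them (persistence of the attractor filtration, and a Lemma~\ref{lemma:sat}-style compactness argument or a perturbation-uniform construction of the isolating neighborhoods) are the right way to fill them in.
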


Theorem~\ref{thm:rp} in conjunction with Corollaries~\ref{cor:axiomA} and \ref{cor:cod1} yield a characterization of robust permanence for a class of structured models. 

\begin{corollary} 
Let $\{M_1,\dots, M_m\}$ is a Morse decomposition for $\partial \C$. Assume
$x\mapsto A(x)$ is  twice continuously differentiable and for each Morse component $M_i$ one of the following assertions hold
\begin{itemize}
\item $M_i$ is Axiom A,
\item $M_i$ is uniquely ergodic, or
\item there exists $j\in \{1,\dots,k\}$ such that $M_i\subset \{ x\in \C: x_l\gg 0 \mbox{ for }l\neq j\}$.
\end{itemize}
Then \eqref{eq:main} is robustly permanent if and only if \eqref{eq:condition} holds for all ergodic measures supported by $\cup_i M_i$.
\end{corollary}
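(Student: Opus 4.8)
The plan is to assemble the final corollary directly from Theorem~\ref{thm:rp} together with Corollaries~\ref{cor:axiomA} and~\ref{cor:cod1}, since all the analytic work has already been done. The only genuine task is to reconcile the quantifiers: Theorem~\ref{thm:rp} phrases robust permanence in terms of condition~\eqref{eq:condition} holding for \emph{all invariant} measures on each Morse component (sufficiency) and being violated by some \emph{ergodic} measure (necessity), whereas the corollary asserts the clean equivalence with~\eqref{eq:condition} holding for all \emph{ergodic} measures supported by $\cup_i M_i$. First I would establish the easy direction: if \eqref{eq:main} is robustly permanent, then by the converse half of Theorem~\ref{thm:rp} no ergodic measure on any $M_i$ can violate~\eqref{eq:condition}, so \eqref{eq:condition} holds for every ergodic measure supported by $\cup_i M_i$. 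This direction needs none of the structural hypotheses on the $M_i$ beyond twice differentiability of $A$.

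For the reverse direction, suppose \eqref{eq:condition} holds for all ergodic measures supported by $\cup_i M_i$, equivalently for all ergodic measures on each individual $M_i$ (since the support of an ergodic measure, being indecomposable, lies in a single component). The goal is to upgrade this to the ``all invariant measures'' hypothesis demanded by the sufficient half of Theorem~\ref{thm:rp}, and here the three cases enter. I would argue component by component: fix $M_i$ and show that robust unsaturatedness of $M_i$ follows from~\eqref{eq:condition} holding merely on ergodic measures of $M_i$. If $M_i$ is uniquely ergodic, its only invariant measure is ergodic, so the two quantifier ranges coincide. If $M_i$ is Axiom~A, Corollary~\ref{cor:axiomA} gives exactly the equivalence (robust unsaturatedness iff \eqref{eq:condition} on ergodic measures). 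If $M_i\subset\{x:x_l\gg 0\text{ for }l\neq j\}$, then after relabeling so that $j=1$ this is precisely the hypothesis of Corollary~\ref{cor:cod1}, whose proof shows via the ergodic decomposition theorem that $\lambda_1(\mu)>0$ on all ergodic measures forces $\lambda_1(\mu)>0$, hence~\eqref{eq:condition}, on all invariant measures. In every case $M_i$ is robustly unsaturated.

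Having shown each $M_i$ is robustly unsaturated, I would invoke Theorem~\ref{thm:permanence} (the Garay--Hofbauer--So characterization) applied not only to \eqref{eq:main} but uniformly to its $\delta$-perturbations: since each component is robustly unsaturated, there is a common $\delta>0$ for which every $\delta$-perturbation has all Morse components unsaturated, and the Morse decomposition itself continues to a Morse decomposition of the perturbed boundary flow. This yields permanence of every $\delta$-perturbation with a uniform region of repulsion, i.e.\ robust permanence. Concretely this is the content already packaged in the sufficiency half of Theorem~\ref{thm:rp}, so the cleanest write-up simply verifies that the ergodic hypothesis implies~\eqref{eq:condition} on all invariant measures of each $M_i$ and then quotes Theorem~\ref{thm:rp}.

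The main obstacle is entirely the quantifier bridge in the Axiom~A and codimension-one cases, and it is already resolved by the two corollaries; the subtlety worth flagging is only that \eqref{eq:condition} on ergodic measures does \emph{not} in general imply it on all invariant measures (as the remark following Theorem~\ref{thm:unsaturated} stresses, invariant measures lie in the convex hull of ergodic ones, and $\max_i\lambda_i$ is not linear), so the equivalence is genuinely special to these three classes and the proof must lean on their structure rather than on a general density or convexity argument. Everything else is bookkeeping: confirming that ergodic measures respect the component decomposition and that the twice-differentiability hypothesis, needed for the Pesin-theoretic necessity argument inside Theorem~\ref{thm:unsaturated}, is inherited from the corollary's assumption.
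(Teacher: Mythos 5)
Your proposal is correct and follows essentially the same route as the paper, which gives no separate argument for this corollary beyond remarking that it follows from Theorem~\ref{thm:rp} together with Corollaries~\ref{cor:axiomA} and~\ref{cor:cod1}; your case-by-case quantifier bridge (unique ergodicity trivially, Sigmund's density via Corollary~\ref{cor:axiomA}, and the ergodic decomposition via Corollary~\ref{cor:cod1}) is exactly the intended assembly. You also correctly identify the one genuine subtlety -- that \eqref{eq:condition} on ergodic measures does not in general pass to all invariant measures -- which is precisely why the three structural hypotheses are needed.
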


Our results also provide a structured analogue for characterizing  Òtotally permanent systemsÓ ~\cite{jmaa-02}. For an ergodic probability measure $\mu$, define $\supp(\mu)$ to be the subset $I\subset \{1,\dots, k\}$ such that
$\mu(\prod_{i\in I}\C_+)=1$.


\begin{corollary} The following statements are equivalent:
\begin{itemize}
\item for all ergodic probability measures $\mu$ with support in $\partial \C$,
\[
\lambda_i(\mu)>0\mbox{ for all }i\in \{1,\dots, k\}\setminus \supp(\mu)
\]
\item \eqref{eq:main} and all of its subsystems are robustly permanent.
\end{itemize}
\end{corollary}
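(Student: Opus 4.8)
\emph{Proof strategy.} The plan is to prove the two implications separately, in each case reducing to the sufficient/necessary criteria already in hand (Theorem~\ref{thm:rp} and Corollary~\ref{cor:cod1}). Throughout I regard each subsystem as the restriction of \eqref{eq:main} to an invariant face $\prod_{i\in J}\Ci$ with $J\subseteq\{1,\dots,k\}$; this restriction is again a system of the form \eqref{eq:main} satisfying \textbf{A1}--\textbf{A4}, and for $i\in J$ its invasion rates agree with the ambient $\lambda_i(\mu)$, since the cocycle of population $i$ only sees $A_i$ along orbits in the (invariant) face. The full system is the case $J=\{1,\dots,k\}$.

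For the forward implication I would first note that the hypothesis is inherited by every subsystem: an ergodic $\mu$ with support in the boundary of $\prod_{i\in J}\Ci$ also has support in $\partial\C$ and $\supp(\mu)=I\subsetneq J$, so the displayed inequality gives $\lambda_i(\mu)>0$ for every $i\in J\setminus I$. It therefore suffices to prove the self-contained claim that the boundary hypothesis forces a system of the form \eqref{eq:main} to be robustly permanent, and then apply the claim to each $J$. The crux is upgrading the condition from ergodic to invariant measures. By Proposition~\ref{prop:invasion2} one has $\lambda_i(\mu)=0$ for $i\in\supp(\mu)$, so the hypothesis yields $\lambda_i(\mu)\ge 0$ for \emph{all} $i$ and all ergodic $\mu$ supported in $\partial\C$. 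Now let $\nu$ be any invariant measure with $\nu(\partial\C)=1$ and write its ergodic decomposition $\nu=\int \mu\,dP(\mu)$; since $\partial\C$ is closed and invariant, $P$-almost every $\mu$ is ergodic with $\mu(\partial\C)=1$. Because $\lambda_i(\nu)=\int\lambda_i(\mu)\,dP(\mu)$ with each integrand nonnegative, $\max_i\lambda_i(\nu)\le 0$ would force $\lambda_i(\mu)=0$ for all $i$ and $P$-a.e.\ $\mu$; but vanishing of every $\lambda_i(\mu)$ forces $\supp(\mu)=\{1,\dots,k\}$, i.e.\ $\mu(\C^+)=1$, contradicting $\mu(\partial\C)=1$. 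Hence \eqref{eq:condition} holds for every invariant measure supported in $\partial\C$, so it holds for every component of any Morse decomposition of $\partial\C$ (one exists by Conley's theory), and Theorem~\ref{thm:rp} delivers robust permanence.

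For the converse, suppose \eqref{eq:main} and all its subsystems are robustly permanent and let $\mu$ be ergodic with $\supp(\mu)=I\subsetneq\{1,\dots,k\}$. Fix $j\notin I$ and pass to the subsystem on $J=I\cup\{j\}$: there $\mu$ is an ergodic measure supported in $\prod_{i\in I}\Ci^+$, i.e.\ on the codimension-one face of $\prod_{i\in J}\Ci$ missing only population $j$, and it is carried by a single component of a Morse decomposition of the boundary of that face. Robust permanence of this subsystem together with the necessary direction of Theorem~\ref{thm:rp} (equivalently Corollary~\ref{cor:cod1} with $j$ in the role of population $1$) gives $\max_{i\in J}\lambda_i(\mu)>0$; since $\lambda_i(\mu)=0$ for $i\in I$, this forces $\lambda_j(\mu)>0$. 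As $j\notin I$ was arbitrary, $\lambda_i(\mu)>0$ for all $i\in\{1,\dots,k\}\setminus\supp(\mu)$, which is the first statement.

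\emph{Main obstacle.} The delicate point is the ergodic-to-invariant passage in the forward direction: Theorem~\ref{thm:rp} requires \eqref{eq:condition} for \emph{all} invariant measures on each Morse component, whereas the hypothesis only constrains \emph{ergodic} measures. The strength of the ``all missing species'' hypothesis is exactly what closes this gap---it makes every $\lambda_i(\mu)$ nonnegative on $\partial\C$, so a single nonnegativity-plus-integration argument produces a positive $\max_i\lambda_i(\nu)$ for every boundary-supported invariant measure, with no need to refine the Morse decomposition so that its components lie in single faces. A secondary point to record is that the necessary direction invoked in the converse presumes the standing assumption that $x\mapsto A(x)$ is twice continuously differentiable.
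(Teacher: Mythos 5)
Your proposal is correct and follows essentially the same route as the paper: the forward direction amounts to observing that Proposition~\ref{prop:invasion2} makes every $\lambda_i(\mu)\ge 0$ with at least one strictly positive for each ergodic boundary measure, which is exactly the paper's invocation of condition \eqref{eq:condition2} with $p=(1,\dots,1)$ (you merely unpack the ergodic-to-invariant passage via the ergodic decomposition instead of citing the minimax equivalence already established in Theorem~\ref{thm:unsaturated}), and the converse is the paper's argument of applying the necessary condition of Theorem~\ref{thm:rp} to each subsystem $I\cup\{j\}$ together with the second assertion of Proposition~\ref{prop:invasion2}. Your closing remark that the converse requires $x\mapsto A(x)$ to be $C^2$ is a fair observation about a hypothesis the corollary's statement leaves implicit.
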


\begin{proof}
Suppose the first statement holds. Then unsaturated condition given by \eqref{eq:condition2} with $p=(1,\dots,1)$ and Theorem~\ref{thm:rp} implies that \eqref{eq:main} and all its subsystems are robustly permanent. The other direction follows from the necessary condition for robust permanence in Theorem~\ref{thm:rp} and the second assertion of Proposition~\ref{prop:invasion2}. \end{proof}

To illustrate the broad applicability of Theorem~\ref{thm:rp}, we develop applications to spatially structured models in section~\ref{sect:patch} and to a disease model and a gene network model in section~\ref{sect:applications}.

\section{Applications: Patch models\label{sect:patch}}

A fundamental application of our results are to spatially structured models, with $k$ species dispersing between $m$ patches~\cite{levin-74,yodzis-78,czaran-98}:
\begin{equation}\label{eq:patch}
\frac{d x_i^j}{dt} = x_i^j f_i^j(x^j) + \sum_l d_i^{jl}x_i^l - e_i^j x_i^j  \qquad i=1, \dots, k \quad j = 1, \dots, m
\end{equation}
where $x_i^j$ denotes the density of species $i$ in patch $j$, $x^j=(x_1^j,\dots,x_k^j)$ is the vector of species densities in patch $j$, $f_i^j$ is the per-capita growth rate of species $i$ in patch $j$, $d_i^{jl} \geq 0$ is the dispersal rate for species $i$ from patch $l$ into patch $j$, and $e_i^j$ is the emigration rate of species $i$ out of patch $j$. Hence $e_i^j \geq \sum_l d_i^{lj}$. We assume that the matrices $(d_i^{jl})_{j,l}$ are irreducible for each species $i$.
It is then easy to write \eqref{eq:patch} in the form \eqref{eq:main}, with $x_i = (x_i^1,\dots,x_i^m)'$ where $'$ denotes transpose.

For a single species \eqref{eq:patch} generates a monotone flow. Under mild assumptions, e.g., each $f_i^j$ is decreasing and negative for large densities, there is a globally stable equilibrium, see, e.g., \cite{freedman-takeuchi-89} and \cite[sect. 5.4]{hofbauer-90}.
This equilibrium will be positive if the invasion rate at the origin  $\lambda(\0)$, which is given by the leading eigenvalue (stability modulus) of the matrix $(d^{jl} + f^j(\0) - e^j) $,  is positive.

\subsection{Two species}

For \emph{two competing species},  e.g., each $f_i^j$ is decreasing with respect to both species and is negative at large densities,  the dynamics are still monotone, and hence almost all orbits converge to an equilibrium.  Robust permanence requires two conditions. First, $\lambda_i(\0)>0$ for both species in which case there are two single species equilibria $\mathbf{E}_1$ and $\mathbf{E}_2$.  Second, the invasion rates $\lambda_2(\mathbf{E}_1)$ and $\lambda_1(\mathbf{E_2})$  are positive. Whether there is a unique positive globally stable equilibrium for the robustly permanent system depends in a delicate way on the system parameters, see \cite{hofbauer-etal-96}. For slowly dispersing populations (i.e. $0<e_i^j\ll 1$ for all $i,j$), the robust permanence condition is particularly straight forward to verify. Let $E_i^j$ be the largest solution to $x_i^jf_i^j(x_i^j)=0$ i.e. the equilibrium attained by species $i$ in patch $j$ when there is no dispersal and no competitors. For slowly dispersing populations,  $\mathbf{E}_1$ is close to the product $(E_1^1,\dots, E_1^m)$, and the matrix $A^2(\mathbf{E}_1)$ is close to the diagonal matrix $\diag(f_2^j(E_1^j))_{j = 1}^m$. Hence the invasion rate $\lambda_2(\mathbf{E_1})$ is close to $\max_j f_2^j(E_1^j)$.
So we obtain: The system with two slowly dispersing competing species  is robustly permanent if and only if
\begin{equation}
\max_j f_2^j(E_1^j) > 0 \quad \text{and} \quad \max_j f_1^j(E_2^j) > 0
\end{equation}
Intuitively, robust permanence  requires for each species there
is at least one patch where it can persist.

For \emph{predator--prey systems} the situation is even easier. Since (under mild assumptions on the prey dynamics) the Morse decomposition of  $\partial \C$ is given by two equilibria $\{\0, \mathbf{E}_1\}$, the spatial predator--prey system is robustly permanent, if the predator can invade at the prey equilibrium: $\lambda_2(\mathbf{E}_1)> 0$. However, the global dynamics are likely to be more complicated.
For instance, if in each patch there is a globally stable limit cycle (e.g., \cite{cheng-81,kuang-freedman-88,hasik-00}) and dispersal is sufficiently slow (i.e. $0<e_i^j\ll 1$ for all $i,j$), our results about robust permanence and the theory of normally hyperbolic manifolds~\cite{hirsch-pugh-shub-77} imply that  there is a positive $m$ dimensional torus, which attracts almost all orbits in $\C$, whenever $\max_j f_2^j(E_1^j)>0$.

\subsection{ Rock-paper-scissors dynamics}

The Lotka-Volterra model of rock-paper-scissor dynamics is a simple model that is used as prototype for understanding intransitive ecological outcomes~\cite{may-leonard-75,schreiber-killingback-preprint}. Here, a simple spatial version of this dynamic is given by
\begin{eqnarray*}
\frac{d  x_1^j}{dt} &=& x_1^j (1-x^j_1-\beta^j x^j_2 - \alpha^j x^j_3)+ \sum_k d_1^{jk}x_1^k - e_1^j x_1^j\\
\frac{d  x_2^j }{dt}&=& x_2^j (1-\alpha^j x^j_1- x^j_2 - \beta^j x^j_3)+ \sum_k d_2^{jk}x_2^k - e_2^j x_2^j\\
\frac{d  x_3^j}{dt} &=& x_i^j (1-\beta^j x^j_1-\alpha^j x^j_2 - x^j_3)+ \sum_k d_3^{jk}x_3^k - e_3^j x_3^j\\
\end{eqnarray*}
where $\alpha^j\in (0,1)$ and $\beta^j>1$ for all $j$. A more general version of this dynamic is presented in \cite{schreiber-killingback-preprint}. Under the assumption that there is no cost to dispersal (i.e. $e_i^j =\sum_l d_i^{lj}$ for all $i,j$), the maximal invariant set in $\partial \C$ consists of the origin $\0$ and a heteroclinic cycle connecting positive single species equilibria $\mathbf{E_1}$, $\mathbf{E_2}$, and $\mathbf{E_3}$. For slowly dispersing populations ($0<e_i^j \ll 1$ for all $i,j$), $\mathbf{E}_1$ is close to $(x_1,x_2,x_3)=(1,\dots,1,0,\dots,0,0,\dots,0)$ and the invasion rates $\lambda_2(\mathbf{E}_1)$ and $\lambda_3(\mathbf{E}_1)$ are close to $\max_j 1-\alpha^j$ and $\max_j 1-\beta^j$. For the other equilibria $\mathbf{E}_i$, the invasion rates of the missing species are also given by $\max_j 1-\alpha^j$ and $\max_j 1-\beta^j$. Consider the Morse decomposition of $\partial \C$ given by $\0$ and the heteroclinic cycle. Since $\lambda_i(\0)=1>0$ for $i=1,2,3$, $\0$ is robustly unsaturated. An algebraic computation reveals that \eqref{eq:condition} holds for all invariant measures supported by the heteroclinic cycle (i.e. all convex combinations of the Dirac measures supported by the equilibria) if and only if
\[
\max_j (1-\alpha^j) > - \max_j (1-\beta^j)
\] 
Equivalently,
\[
2> \min_j \alpha^j + \min_j \beta^j
\]
In particular, even if the heteroclinic cycle is attracting for each patch when the system is uncoupled (i.e. $2<\alpha^j+\beta^j$ for all $j$), it can be repelling for the weakly coupled system.

\subsection{Three species Lotka Volterra in spatially homogenous environments}

Consider Lotka--Volterra dynamics in a spatially homogenous environment:
\begin{equation}\label{LVD}
\frac{dx_i^j}{dt} = x_i^j (r_i - \sum_s a_{is} x_s^j) + \sum_l d_i^{jl}x_i^l - e_i^j x_i^j ,
\qquad i=1,\dots,k; \quad j = 1, \dots, m
\end{equation}
We assume that dissipativity can be shown by a linear Liapunov function:
There are $c_i > 0$ such that $\sum_i c_i a_{is}x_ix_s < 0$ holds for all $x \in \R^k_+ \setminus \{\0\}$.
Then the weighted sum of densities across all patches $\sum_{i,j} c_i x_i^j$ is decreasing for large densitities  and hence \eqref{LVD} generates a dissipative semiflow with global attractor $\Gamma$.
We further assume that for each patch $j$ and each species $i$ the immigration rate equals the emigration rate, i.e.,
\begin{equation}\label{eq:im=em}
e_i^j = \sum_l d_i^{jl}.
\end{equation}
This guarantees that the set $\H = \{x \in \C: x_i^j = x_i^l  \  \forall i,j,l\}$ of spatially homogenous states  is forward invariant under \eqref{LVD}.

Now consider  the Lotka--Volterra dynamics without spatial structure
\begin{equation}\label{LV}
\frac{d x_i}{dt} = x_i (r_i - \sum_s a_{is} x_s),
\qquad i=1,\dots, k
\end{equation}
It has been shown \cite[ch. 16.1, 16.2]{hofbauer-sigmund-98} that for $k=3$, \eqref{LV}  is robustly permanent (robust meaning here within the class of Lotka--Volterra systems) if and only if  all equilibria on the boundary of $\R^3_+$ are unsaturated and whenever there is a heteroclinic cycle connecting the one species equilibria then this cycle is repelling. Equivalently the following four conditions hold.
\begin{enumerate}
\item[(i)] there exists an interior equilibrium $\hat x$ (i.e. $A\hat x = r$ with $\hat x \gg 0$);
\item[(ii)] $\det (-A) > 0$;
\item[(iii)] the 2 species subsystems  are not bistable competition systems;
\item[(iv)] if there is a heteroclinic cycle between the  one species equilibria, say $\mathbf{E}_1 \to \mathbf{E}_2 \to \mathbf{E}_3  \to \mathbf{E}_1$ then the following inequality holds
\begin{equation}\label{eq:hetcy}
\lambda_2(\mathbf{E}_1)\lambda_3(\mathbf{E}_2)\lambda_1(\mathbf{E}_3) > |\lambda_3(\mathbf{E}_1)\lambda_1(\mathbf{E}_2)\lambda_2(\mathbf{E}_3)|
\end{equation}
\end{enumerate}
Furthermore, the proof in \cite[ch. 16.1, 16.2]{hofbauer-sigmund-98} shows that
 the boundary flow of \eqref{LV} has a simple Morse decomposition:
if (iv) applies then there are two Morse sets, the origin $\0$ is a repeller, and the heteroclinic cycle is the dual attractor (within $\partial \R^3_+$); if (iv) does not apply then the finitely many boundary equilibria form a Morse decomposition of the boundary flow.
 
\begin{theorem}\label{thm:lvd}
Under the above assumptions, \eqref{LVD} for $k=3$ is robustly permanent if and only if (i), (ii), (iii) and (iv) hold.
\end{theorem}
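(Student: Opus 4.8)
The plan is to reduce the spatially structured system \eqref{LVD} to the unstructured system \eqref{LV} by exploiting the forward invariance of the homogeneous subspace $\H$ together with the fact, recorded above, that \eqref{LVD} restricted to $\H$ is precisely \eqref{LV}; robust permanence is then read off from Theorem~\ref{thm:rp}. The engine of the reduction is the observation that every boundary equilibrium and the entire heteroclinic cycle of \eqref{LVD} lie in $\H$ and coincide with their counterparts for \eqref{LV}, and that the invasion rates $\lambda_i(\mu)$ of \eqref{LVD} at these homogeneous invariant sets agree with the corresponding invasion rates of \eqref{LV}. Granting these facts, conditions (i)--(iv) translate verbatim between the two systems, and Theorem~\ref{thm:rp} supplies both implications.

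First I would prove the matching of invasion rates. For each species $i$ write $M_i=(d_i^{jl})_{j,l}-\diag(e_i^j)$ for the dispersal part of the $x_i$-block. Assumption \eqref{eq:im=em} says exactly that every row sum of $M_i$ vanishes, so $M_i$ is an irreducible matrix with nonnegative off-diagonal entries satisfying $M_i\mathbf{1}=0$, where $\mathbf{1}=(1,\dots,1)'$; thus its stability modulus is $0$ with Perron eigenvector $\mathbf{1}$. At a spatially homogeneous state $x\in\H$ the $i$-th block takes the form $A_i(x)=c_i(x)\,\id+M_i$ with the scalar $c_i(x)=r_i-\sum_s a_{is}x_s^1$, so the normalized dominant eigenvector of Section~\ref{sect:invasion} is $u_i(x)=\mathbf{1}/\sqrt{m}$ and
\[
\langle A_i(x)u_i(x),u_i(x)\rangle=c_i(x)+\langle M_i\mathbf{1},\mathbf{1}\rangle/m=c_i(x).
\]
Hence for any invariant measure $\mu$ supported in $\H$, $\lambda_i(\mu)=\int c_i\,d\mu$ is literally the invasion rate of species $i$ for \eqref{LV}. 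In particular $\lambda_i(\0)=r_i$, while $\lambda_j(\mathbf{E}_i)=r_j-a_{ji}\hat x_i$ reproduces the unstructured invasion rates, so the heteroclinic inequality \eqref{eq:hetcy} is identical for the two systems.

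Next I would locate the boundary dynamics. Because each dispersal matrix is irreducible, no proper patch-face is invariant: if $x_i^l>0$ for some $l$, then $\dot x_i^j\ge\sum_l d_i^{jl}x_i^l>0$ wherever $x_i^j=0$, so the maximal invariant set of $\partial\C$ is carried by the whole-species extinction faces $\{x_i=0\}$, on which \eqref{LVD} restricts to the spatial Lotka--Volterra subsystem in the surviving species. Using \eqref{eq:im=em} one checks that every single-species subsystem has the homogeneous state $x_i^j\equiv r_i/a_{ii}$ as its unique positive equilibrium $\mathbf{E}_i$, and that any two-species coexistence equilibrium is likewise homogeneous; monotonicity of the competitive two-species spatial subsystems (together with dissipativity and irreducible dispersal) forces their interior dynamics to converge to such an equilibrium, while in the heteroclinic case (iv) the two-species faces carry no interior equilibrium and reduce to the connections $\mathbf{E}_1\to\mathbf{E}_2\to\mathbf{E}_3\to\mathbf{E}_1$. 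I expect this step---ruling out additional recurrence, and hence additional invariant measures, on the two-species spatial faces---to be the main obstacle, since a predator--prey pair could in principle sustain non-homogeneous oscillations; it is controlled by the monotonicity of competitive pairs and the Lotka--Volterra structure supplied by (i)--(iii). The upshot is a Morse decomposition of $\partial\C$ whose components are exactly those of the boundary flow of \eqref{LV} described above, all lying in $\H$.

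Finally I would assemble the equivalence. In the non-heteroclinic case the components are the homogeneous boundary equilibria, and by the invasion-rate matching each is unsaturated for \eqref{LVD} iff it is unsaturated for \eqref{LV} iff (i)--(iii) hold; in the heteroclinic case the components are $\0$ and the cycle, and \eqref{eq:condition} holds on all invariant measures of the cycle iff \eqref{eq:condition2} holds for its ergodic (equilibrium) measures iff \eqref{eq:hetcy} holds, i.e. iff (iv) holds. Applying the sufficiency half of Theorem~\ref{thm:rp} then yields robust permanence of \eqref{LVD} whenever (i)--(iv) hold. For the converse, if (i), (ii) or (iii) fails, some homogeneous boundary equilibrium is saturated, so an ergodic measure violates \eqref{eq:condition} and the necessity half of Theorem~\ref{thm:rp} denies robust permanence; if (iv) fails, the matched transverse exponents make the spatial heteroclinic cycle fail to be robustly repelling by the classical criterion, so \eqref{LVD} is not robustly permanent. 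This settles both directions.
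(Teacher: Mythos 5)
Your overall strategy --- restrict to the homogeneous subspace $\H$, match the invasion rates of \eqref{LVD} and \eqref{LV} at homogeneous invariant sets, transport the Morse decomposition of the boundary flow of \eqref{LV} to $\partial\C$, and invoke Theorem~\ref{thm:rp} --- is exactly the paper's, and your computation that on $\H$ one has $A_i(x)=c_i(x)\id+M_i$ with $M_i\mathbf{1}=0$, hence $u_i=\mathbf{1}/\sqrt{m}$ and $\lambda_i(\mu)=\int c_i\,d\mu$, is a correct and welcome expansion of a step the paper only asserts. Your necessity argument also works, though the paper gets it more cheaply: robust permanence of \eqref{LVD} restricts to robust permanence of \eqref{LV} on the invariant subspace $\H$, and (i)--(iv) is the known characterization for three-species Lotka--Volterra systems.

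The genuine gap is precisely the step you flag as ``the main obstacle'' and then do not close: proving that the maximal invariant set of each two-species face of \eqref{LVD} lies in $\H$, i.e.\ that these $2m$-dimensional spatial subsystems carry no recurrence (hence no invariant measures) outside the homogeneous equilibria and their connections. Monotonicity, which is all you offer, applies only to the competitive pairs, and even there it yields generic rather than universal convergence; it says nothing about a predator--prey pair, which is exactly where non-homogeneous oscillations threaten, and your claim that every coexistence equilibrium of a spatial two-species subsystem is homogeneous is asserted, not proved. The paper closes this with two specific tools. First, condition (ii) forces any interior equilibrium $\mathbf{E}_{12}$ of a two-species Lotka--Volterra subsystem to be a sink rather than a saddle, so $a_{11}a_{22}>a_{12}a_{21}$; this makes the $2\times 2$ submatrix VL-stable, and Hastings' global-stability theorem for Lotka--Volterra systems with diffusion then shows that the homogeneous $\mathbf{E}_{12}$ attracts every interior orbit of the spatial pair. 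This Lyapunov-function argument, not monotonicity, is what disposes of your predator--prey worry, and it is the one place where (ii) does real work. Second, when a pair has no interior equilibrium and one species excludes the other, the paper uses Lemma~\ref{l:kishimoto}: Hastings again for predator--prey or mutualistic type, and for competition a family of forward-invariant rectangles contracting to $\mathbf{E}_1$. Without a substitute for these two arguments your Morse decomposition of $\partial\C$ is not established, and the sufficiency half of Theorem~\ref{thm:rp} cannot be applied.
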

\begin{proof}
Suppose \eqref{LVD} is robustly permanent.
Since the nonspatial system  \eqref{LV} is the restriction of \eqref{LVD} to the invariant subspace $\H$, \eqref{LV} must be robustly permanent and hence (i) - (iv) hold.

Conversely, suppose that (i) - (iv) hold. We show that  the above Morse decomposition of the boundary flow of \eqref{LV} (now in $\H$) is also a Morse decomposition for \eqref{LVD} restricted to the maximal invariant subset of $\partial \C$ which  is contained in the spatial two species  subsystems of \eqref{LVD}.
For this we show that in each 2 species system the orbits  of \eqref{LVD} converge to $\H$ (For the one species subsystems this is obvious.)
If a two species subsystem of \eqref{LV} has an internal equilibrium, say $\mathbf{E}_{12}$ then by assumption  (ii), this is not a saddle but a  sink, and hence $a_{11}a_{22} > a_{12}a_{21}$. This implies that the $1$--$2$ submatrix of $A$ is VL-stable \cite[ch. 15.3]{hofbauer-sigmund-98}, and
\cite{hastings-78} implies that $\mathbf{E}_{12}$ is the global attractor for the spatially structured system \eqref{LVD} restricted to the first two species.

If a two species subsystem of \eqref{LV} has no internal equilibrium, then a one species equilibrium say $\mathbf{E_1}$ is the global attractor.
Then Lemma~\ref{l:kishimoto} (see below) shows the corresponding result for \eqref{LVD}.
A similar proof applies to the case where  the origin $\0$ is the global attractor of  a two species subsystem.

Finally observe that under the above assumptions \eqref{LVD}--\eqref{eq:im=em}, the invasion rates $\lambda_i$ at spatially homogenous boundary equilibria are the same for \eqref{LVD} and \eqref{LV}. Since all boundary equilibria are unsaturated for \eqref{LV}, they are for \eqref{LVD} as well. The same applies in case (iv) to the heteroclinic cycle. Note that  the condition \eqref{eq:hetcy} on the  eigenvalues 
is equivalent to the fact that all invariant measures supported on the heteroclinic cycle are unsaturated, see \cite[Ex. 4.5]{garay-hofbauer-03}.

\begin{lemma}\label{l:kishimoto}
Suppose, in a two species Lotka--Volterra  system  \eqref{LV} ($k=2$) all interior orbits converge to the one species equilibrium $\mathbf{E}_1$.
Then all orbits of the spatial version \eqref{LVD} ($k=2$) converge to the spatially homogeneous one species equilibrium $\mathbf{E}_1 \in \H$.
\end{lemma}

\begin{proof}
If the system is of predator--prey or mutualistic type then Hasting's \cite{hastings-78} result applies again. So
let us assume that the local interaction  is given by the competition system
\begin{equation}\label{LV2}
\frac{dx_1}{dt} = r_1 x_1 (1 - x_1 - \alpha x_2) \qquad
\frac{d x_2}{dt} = r_2 x_2 (1 - \beta x_1 - x_2)
\end{equation}
with $0 < \alpha <1 < \beta$. In this case,  Kazuo Kishimoto (letter to JH, Oct 1987) has given the following argument.
There exists a family of forward invariant rectangles contracting to $\mathbf{E}_1$. This shows that all interior solutions of \eqref{LVD} ($k=2$) converge to $\mathbf{E}_1$.
\end{proof}
\end{proof}

Theorem \ref{thm:lvd} shows that a spatial network of identical patches  is permanent, if the within patch dynamics is permanent, and three species Lotka-Volterra dynamics.
It is not clear how to extend this result to more than three species. The crucial step in the three species case is that permanence precludes
bistable two species subsystems. The spatial version of a bistable two species system allows plenty of stable equilibria outside $\H$, see \cite{levin-74}. However, permanent four species Lotka Volterra systems may have two species bistable subsystems  \cite[ch.16.4]{hofbauer-sigmund-98}.
So there may be invariant sets on the boundary outside $\H$ which need to be unsaturated to make \eqref{LVD} permanent.

\section{More Applications\label{sect:applications}}

\subsection{Disease dynamics with density-dependent demography\label{sect:applications:disease}}
Gao and Heth\-cote~\cite{gao-hethcote-92} introduced a model of disease dynamics with density-dependent demo\-graphy. Here we consider a variation of their model in which a population of size $P$ has a per-capita birth rate $b(P)$ and per-capita death rate $d$. To ensure that the population persists in the absence of the disease, we assume that there exists $K>0$ such that $b(K)=d$. To describe the disease dynamics, let $S$ be the number of individuals susceptible to the disease, $I$ the number of infected individuals, and $R$ the number of individuals that have recovered from the disease. We assume that $P=S+I+R$. If the disease transmission is asymptotic, $\beta$ is the contact rate between susceptible and infected individuals, $\gamma$ is the rate at which individuals recover from the disease,  and $m$ the mortality rate due to the disease, then the population dynamics are given by
\begin{eqnarray*}
\frac{dP}{dt}&=& b(P)P- dP - m I \\
\frac{dI}{dt}&=& \beta (P-I-R) \frac{I}{\epsilon+P} - (\gamma+d+m)I\\
\frac{dR}{dt}&=& \gamma I - d R
\end{eqnarray*}
where $\epsilon>0$ is a constant. It is useful to introduce a change of coordinates in which $y=\frac{I}{P}$ and $z=\frac{R}{P}$:
\begin{eqnarray*}
\frac{dP}{dt}&=& \left( b(P)- d - my\right) P\\
\frac{dy}{dt}&=& \left( \beta(1-y-z)\frac{P}{\epsilon+P}-\gamma-m(1-y)-b(P)\right) y\\
\frac{dz}{dt}&=& \gamma y -\left(b(P)-my\right)z
\end{eqnarray*}
Setting $x_1=P$ and $x_2=(y,z)'$ where $'$ denotes transpose yields a structured model where $A_1(x)=b-d- my$ and
\[
A_2(x)=\begin{pmatrix}\beta(1-y-z)\frac{P}{\epsilon+P}-\gamma-m(1-y)-b(P) & 0 \cr \gamma & my-b(P)\end{pmatrix}
\]
Since $\{\0, (K,0,0)\}$ is a Morse decomposition of the boundary dynamics, Theorem~\ref{thm:rp} and Remark~\ref{remark}  imply that this model is robustly permanent if and only if $b(0)>d$ and $\beta\frac{K}{\epsilon+K}>\gamma+m+d$. In particular, for $\epsilon>0$ sufficiently small, one requires that the basic reproductive number $\frac{\beta}{\gamma+m+d}$ is greater than one.

Remarkably, this same criterion determines robust permanence of models with significantly more complicated boundary dynamics. For example, suppose the focal population is a predator species. If the prey has abundance $N$ and exhibits logistic dynamics, $f(N)$ is the per-capita predator consumption rate of the prey, and $b(N)$ is the per-capita reproductive rate of the predator, then the dynamics become
\begin{eqnarray*}
\frac{dN}{dt}&=& rN(1-N/K)- f(N)P\\
\frac{dP}{dt}&=& \left( b(N)- d - my\right) P\\
\frac{dy}{dt}&=& \left( \beta(1-y-z)\frac{P}{\epsilon+P}-\gamma-m(1-y)-b(N)\right) y\\
\frac{dz}{dt}&=& \gamma y -\left(b(N)-my\right)z
\end{eqnarray*}
where $r>0$ and $K>0$, and $f(N),b(N)$ satisfy $f(0)=b(0)=0$. Define $x_1=N$, $x_2=P$, $x_3=(y,z)'$. Let us assume that the  disease-free community (i.e. $y=z=0$) is permanent (i.e. $b(K)>d$) and let $A$ be the global attractor in the interior of the $P$--$N$ plane. Under these assumptions, a Morse decomposition of the boundary dynamics is given by $(x_1,x_2,x_3)=(0,0,0)$, $(x_1,x_2,x_3)=(K,0,0)$, and $A$. Since the equilibria $(0,0,0)$ and $(K,0,0)$ are unsaturated, it remains to characterize $\lambda_3(\mu)$ for any invariant measure $\mu$ supported by $A$. Proposition~\ref{prop:invasion2} and Remark~\ref{remark} imply
\[
\lambda_2(\mu)=\int_\C (b(x_1)-d) \, d\mu(x)=0
\]
Hence $\int_C b(x_1)\,d\mu(x)=d$ and
\[
\lambda_3(\mu)=\beta\int \frac{x_2}{\epsilon+x_2}d\mu(x)-\gamma -m -d
\]
Theorem~\ref{thm:rp} and Remark~\ref{remark} imply that this structured model is robustly permanent if $\frac{\beta}{\gamma+m+d}>1$ and $\epsilon>0$ is sufficiently small.

\subsection{Gene networks}

 The repressilator is an oscillatory gene network based on three (or more generally an odd number of) transciptional repressors. Such a system has been genetically engineered in E-coli by Elowitz and Leibler \cite{elowitz-leibler-00}. Mathematical models go back to \cite{banks-mahaffy-78, smith-87}. They all involve concentrations of  proteins and mRNAs. A modified model with auto-activation was suggested in \cite{muller-etal-06}
\begin{eqnarray}\label{RA}
\dot x_i &=& \beta(y_i - x_i) \\
\dot y_i &= &\alpha F(x)_i - y_i
\end{eqnarray}
with
\begin{equation}
F(x)_i = x_i g(x_i,x_{i-1}) =
\frac{x_i}{1 + x_i + \rho x_{i-1} + \kappa\rho  x_i x_{i-1}}
\end{equation}
Here $x_i$ and $y_i$ are normalized concentrations of  proteins and mRNAs belonging to gene $i$, $\rho$ is the strength of repression and $\kappa$ a cooperativity parameter. This is a structured system of type \eqref{eq:main}
\begin{equation} \label{RAc}
\begin{pmatrix} \dot{x_i} \\ \dot{y_i} \end{pmatrix} =
\begin{pmatrix} - \beta &  \beta \\
\alpha g(x_i,x_{i-1}) & - 1 \end{pmatrix}
\begin{pmatrix} x_i \\ y_i \end{pmatrix}
\end{equation}

In \cite[Thm 3]{muller-etal-06},  it was shown that for $n= 3$,   $\alpha>1$ and $\rho>1$,
system \eqref{RA} has a heteroclinic cycle connecting the $3$ single gene equilibria (similar to the rock--scissors--paper dynamics of section 5.2). Moreover
\begin{itemize}
\item the system is permanent, i.e., the boundary of $\R^{6}_+$ is repelling,
if $\lambda + \mu> 0$, 
\item this heteroclinic cycle is asymptotically stable if
$\lambda +  \mu < 0$
\end{itemize}
where $\lambda > 0$ and $\mu < 0$ are the invasion rates of the two missing genes, i.e., the leading eigenvalues of the $2 \times 2$ matrix in \eqref{RAc} evaluated at the appropriate single gene equilibria (for explicit expressions see \cite[eqs.\ (127, 128)]{muller-etal-06}.)

Theorem \ref{thm:rp} implies the permanence condition implies  the stronger conclusion of
robust permanence. A similar result holds for  $n$ odd, but the heteroclinic cycle is then actually a heteroclinic network.

\bibliography{../../seb}

\end{document}